\documentclass[journal]{IEEEtran}
\usepackage{ifpdf}
\ifpdf
\else
\fi

\usepackage[nocompress]{cite}

\ifCLASSINFOpdf
\usepackage[pdftex]{graphicx}
\else
\usepackage[dvips]{graphicx}
\fi
\usepackage[T1]{fontenc}
\usepackage{epstopdf}
\usepackage{overpic}
\usepackage{amsmath,amssymb,amsfonts,latexsym}
\usepackage{upgreek}

\newtheorem{proposition}{\bf Proposition}

\newtheorem{remark}{\bf Remark}

\def\tr{\mathrm{Tr}}
\def\re{\mathrm{Re}}

\allowdisplaybreaks[4]

\usepackage[linesnumbered, ruled]{algorithm2e}
\usepackage{algorithmic}  

\usepackage{float}

\usepackage{array}
\usepackage{multirow,tabularx}
\usepackage{textcomp}
\usepackage{stfloats}
\usepackage{url}
\usepackage{xurl}
\usepackage{verbatim}
\usepackage[caption=false]{subfig}
\usepackage{enumitem}
\usepackage{balance}
\usepackage[table]{xcolor}
\usepackage{booktabs}
\usepackage{pifont}

\usepackage{bbding}
\usepackage{fixltx2e}
\usepackage{color}
\usepackage{comment}
\usepackage{bm}
\usepackage{hyperref}

\hypersetup{
	pdftitle={Draft},
	pdfauthor={Yifan Guo},
	bookmarks=true,
	bookmarksopen=true,
	bookmarksnumbered=true,
	colorlinks=true,
	linkcolor=blue,
	citecolor=green,
	urlcolor=magenta
}

\begin{document}
\title{FAS-aided Robust Anti-Jamming Communications: Continuous and Discrete Positioning Designs}
\author{{Yifan~Guo,~Junshan~Luo,~Shilian~Wang,~and~Zhenhai~Xu}
	\thanks{Yifan Guo, Junshan Luo, Shilian Wang and Zhenhai~Xu are with College of Electronic Science and Technology, National University of Defense Technology, Changsha 410073, China (e-mails: mailguoyifan@nudt.edu.cn, luojunshan10@nudt.edu.cn, wangsl@nudt.edu.cn, drxzh930@sina.com).}
	\thanks{The code of this paper is publicly at \url{https://github.com/mistybeep/QoS-Aware-FAS-Anti-Jamming}.}
}


\maketitle

\begin{abstract}
	This paper investigates the joint optimization of beamforming and antenna positions in fluid antenna system (FAS)-aided anti-jamming communications. We consider a multi‑user multiple-input multiple-output downlink scenario where multiple malicious jammers exist and the jammer channel state information is imperfect. The goal is to maximize the worst‑case sum-rate under quality‑of‑service and transmit power constraints. To achieve this, we develop two distinct optimization frameworks for continuous and discrete antenna position designs, respectively. For continuous design, we propose an alternating optimization (AO) framework that integrates successive convex approximation and majorization minimization (MM) to handle the highly non-convex problem. For discrete design, based on the minimum mean squared error criterion and MM, we reformulate the problem as a sparse recovery task and propose a low-complexity block coordinate descent and simultaneous orthogonal matching pursuit, which enables joint design rather than AO. Through systematic comparison, we uncover a practical phenomenon: the discrete joint design yields superior sum-rate performance compared to the AO-based continuous counterpart under identical conditions. This superiority stems from the sparse recovery formulation which effectively circumvents the severe local optima. Our findings challenge the conventional view that continuous optimization is inherently superior, and reveal that discretization combined with sparse recovery can offer a more effective paradigm for exploiting spatial degrees-of-freedom in FAS-aided anti-jamming communications.
\end{abstract}

\begin{IEEEkeywords}
	Fluid antenna system, anti-jamming communications, position design, robust optimization.
\end{IEEEkeywords}

\section{Introduction}
\IEEEPARstart{T}{he} evolution of beyond fifth-generation (B5G) and forthcoming sixth-generation (6G) wireless networks has aroused tremendous demands for higher data rates, ultra reliability, and massive device connectivity \cite{10379539}. However, the inherent broadcast and superposition nature of open wireless media makes wireless networks more vulnerable to jamming attacks, leading to the potential failure of communication. To cope with this issue, some traditional schemes, e.g., frequency hopping \cite{9733393} and direct-sequence spread spectrum \cite{5472426} have been extensively studied and developed. Nevertheless, these techniques consume extra spectral resources and may become ineffective in the presence of full-band jamming. Therefore, massive multiple-input multiple-output (MIMO) techniques are adopted to enhance array aperture gain and degrees-of-freedom (DoFs) for effective jamming mitigation \cite{8017521}. However, conventional MIMO systems typically assume fixed-position antenna (FPA) configurations, which can not fully exploit spatial DoFs and adapt to wireless channel variations.

Recently, fluid antenna systems (FASs) \cite{PL-FAS,FAS_Wong,FAMA}, also referred to as movable antenna (MA) systems \cite{MA1, MA2, MA-Sur}, have been proposed to overcome the limitations of FPAs. By dynamically adjusting the fluid antenna (FA) position within a confined spatial region, FASs can achieve spatial reconfigurability to create additional spatial DoFs, thereby improving the tradeoff between desired signal strengthening and jamming suppression \cite{CL}. This capability enables FAS to significantly improve anti-jamming performance, providing a robust framework for addressing the reliability and security requirements of future communication systems.
\subsection{Related Works}
The concept of FAS-aided wireless communicationswas first proposed by Wong et al. in \cite{FAS}. Then, the design of FAS can be categorized into two paradigms, i.e., continuous and discrete. In the continuous paradigm, the antenna can physically move to any coordinate within a given continuous region \cite{MA1, MA2}. In contrast, the discrete paradigm physically restricts the antenna to a predefined set of fixed grid positions \cite{FAS_Wong}.
\par
The discrete paradigm has attracted considerable research attention, mainly focusing on performance analysis \cite{PL-FAS,FAS,FAMA,FASOPDG,MIMO-FAS,SFAMA}, index modulation (IM) design \cite{IM-FAS,PIM-FAS}, and port selection \cite{PS1, PS2, BnBFAS}. For example, some works have rigorously investigated the diversity gain and outage performance of FAS \cite{FASOPDG, MIMO-FAS}. Then, the multiplexing gains of multi-user (MU) FASs were analyzed in \cite{FAMA} and \cite{SFAMA}. By encoding additional information into discrete antenna positions, the integration of IM and FAS was studied in \cite{IM-FAS} and \cite{PIM-FAS} to realize high spectral efficiency without increasing hardware complexity. In port selection, deep learning has been employed to directly learn the mapping from few port observation to the best position \cite{PS1, PS2}. To reduce the power consumption of FASs, branch and bound (BnB)-based method for FAS port selection was proposed in \cite{BnBFAS}. Furthermore, flexible weighted minimum mean-square error (FWMMSE) was proposed in \cite{F-WMMSE} to achieve joint beamforming and discrete antenna position optimization for MU-MIMO communications.
\par
Alternatively, the implementation of motor-driven FAS was later introduced in \cite{MA1, MA2}, enabling continuous position optimization and maximizing the potential of spatial diversity. In \cite{MA2}, a field-response channel model was proposed, and the maximum achievable channel gain of the FAS was analytically characterized. Building upon this, further studies have demonstrated that, by exploiting continuous antenna position optimization, significant performance gains over FPA systems can be achieved across a variety of performance objectives, including achievable rates \cite{WSRMA, JSACMA}, physical layer security (PLS) \cite{SAB, RIS-MA, NF-MA}, sensing Cramér-Rao bound (CRB) \cite{FAS-Sensing, FAS-ISAC}, etc. Specifically, to optimize the system’s weighted sum achievable rate, WMMSE algorithm and majorization-maximization (MM) algorithm were used \cite{WSRMA}. In \cite{JSACMA}, employing fractional programming, the authors broke down joint beamforming and continuous antenna position optimization into four separate blocks and merged them with the block coordinate ascent (BCA) method for obtaining a stationary solution. To maximize the multicast secrecy rate, penalty constrained product manifold framework was employed to address the non-convexity of the problem and obtain high-quality sub-optimal solutions \cite{SAB}. Furthermore, FAS has been integrated into emerging systems to improve PLS, such as integrated sensing and communication (ISAC) \cite{RIS-MA} and near-field communication \cite{NF-MA}. Furthermore, the continuous antenna position optimization has validated the effectiveness in improving sensing performance. In \cite{FAS-Sensing}, the authors considered a CRB minimization problem for target angle estimation in both one-dimensional (1D) and two-dimensional (2D) continuous FAS. Then, \cite{FAS-ISAC} evaluated the sensing performance gain of FAS in ISAC by solving the CRB minimization problem using a boundary traversal search algorithm.
\subsection{Motivations}
The existing FAS literature generally holds that the continuous design is inherently superior to the discrete design \cite{CFAS-NOMA, 11419085}. This view is based on the fact that the continuous domain provides an infinite space of feasible solutions, theoretically enabling more precise beam alignment or nulling. However, this intuitive conclusion inevitably overlooks a crucial trade-off: the contradiction between spatial flexibility and the tractability of algorithmic design. Specifically, the continuous design faces the challenge of high non-convexity when optimizing antenna positions, which relies on alternating optimization (AO) strategies, making the algorithms easy to get trapped in poor local optima \cite{WSRMA, JSACMA, RIS-MA, NF-MA}.

Based on these limitations, this paper proposes a counter-intuitive yet highly exploratory possibility: by trading spatial resolution for mathematical tractability, can discrete position design outperform the continuous scheme limited by AO in terms of practical anti-jamming performance? We point out that the discrete design theoretically has the potential to surpass continuous schemes. Specifically, by leveraging grid-induced sparsity, it transforms the non-linear position optimization into a sparse recovery task, thereby achieving joint optimization \cite{F-WMMSE}. However, no literature has systematically investigated the two paradigms in the contexts of FAS-aided anti‑jamming communications, especially under imperfect jammer channel state information (CSI). This paper is driven by the following three specific questions:
\begin{itemize}
	\item[$\blacktriangleright$] \textit{Under complex anti-jamming conditions, can the discrete joint optimization framework yield performance gains over the continuous AO framework?} Continuous design offers theoretical infinite spatial resolution \cite{CFAS-NOMA,11419085}, but AO may converge to poor local optima in complex anti-jamming problems. Discrete design, in contrast, enables joint optimization through sparse recovery but is limited by grid resolution. A systematic comparison under identical conditions is needed to determine the specific scenarios and constraints where each paradigm exhibits its practical advantages.
	
	\item[$\blacktriangleright$] \textit{How does each design behave under extreme conditions?} Anti-jamming communication must withstand high jamming power and uncertain jammer CSI. While continuous design is expected to be more flexible \cite{ICCAntiJamming,CL}, discrete selection may be more vulnerable to estimation errors due to its finite grid. Quantifying their robustness in challenging scenarios is essential for practical deployment.
	
	\item[$\blacktriangleright$] \textit{Why might discrete design outperform continuous?} If discrete design indeed achieves better performance, what explains this counter intuitive outcome? Is it because sparse recovery circumvents local optima? Uncovering the mechanism is key to understanding the fundamental trade‑offs between the two paradigms.
\end{itemize}
\subsection{Contributions}
Motivated by these observations, we seek to answer a critical question: \textbf{\textit{in anti-jamming communications, when can the discrete joint optimization framework practically surpass the continuous AO framework, and what is the achievable gain?}} The main contributions of this paper are summarized as follows:
\begin{itemize}
	\item \textit{Framework:} We establish the first FAS‑assisted MU‑MIMO downlink networks that explicitly account for imperfect jammer CSI. The proposed framework aims to maximize the worst‑case sum‑rate under maximum transmit power, antenna movement, and minimum rate constraints, thereby providing a robust foundation for anti‑jamming transmission.
	
	\item \textit{Continuous Design:} For the scenario where FAS positions can be continuously adjusted, we solve the formulated problem based on AO combined with successive convex approximation (SCA) and MM. Specifically, the non-convex objective function and antenna distance constraints for the continuous antenna position are rigorously handled via SCA, which guarantees convergence to a stationary point of the original problem. Meanwhile, the BS precoding and user decoding subproblems are optimized based on  MM and the MMSE criterion, respectively, both converging to the optimal solutions of their corresponding subproblems. While convergence to the stationary point does not guarantee a globally optimal solution to this highly non-convex problem, this AO framework represents a state-of-the-art approach to continuous FAS problems. Therefore, it can serve as a practical and rigorous benchmark, reflecting the performance limits of current continuous optimization paradigms.
	
	\item \textit{Discrete Design:} For the discrete position and beamformer design, we first discretize the continuous antenna moving region and reformulate the joint optimization problem as a sparse recovery task based on MMSE criterion and MM. The key point is that this sparsity modeling reveals the potential for joint optimization of beamforming and antenna position. In particular, we employ a regularized least squares-based simultaneous orthogonal matching pursuit (RLS-SOMP) method, which leverages the inherent sparsity of the antenna activations to directly and jointly extract the optimal position indices and their corresponding beamforming weights. Finally, the discrete design is rigorously compared with its continuous counterpart in terms of computational complexity.
	
	\item \textit{Analysis:} We perform a series of performance analysis discussions to assess the proposed framework from three perspectives: {sum-rate performance}, {beam alignment capability}, and {robustness}. Numerical results demonstrate that our proposed discrete design algorithm outperforms the state-of-the-art benchmark by 10\% and surpasses the continuous scheme by approximately 15\% under high jamming power conditions. Moreover, both proposed schemes exhibit high robustness, maintaining superior performance over existing alternatives even in the presence of high uncertainty in jammer CSI and high-power jamming attacks.
\end{itemize}
\textit{Notations:} $a$, $\mathbf{a}$, $\mathbf{A}$, and $\mathcal{A}$ represent a scalar, a vector, a matrix, and a set, respectively. $(\cdot)^{\mathrm{T}}$, $(\cdot)^{*}$, $(\cdot)^{\mathrm{H}}$, and $(\cdot)^{-1}$ denote transpose, conjugate, conjugate transpose, and inverse, respectively. The Euclidean norm of a vector $\mathbf{a}$ is expressed as $\Vert \mathbf{a} \Vert$. $\mathrm{Re}(a)$ and $\vert a \vert$ denote the real part and the absolute value of a scalar $a$, respectively. For a matrix $\mathbf{A}$, $\tr(\mathbf{A})$ and $\Vert \mathbf{A} \Vert_{F}$ denotes its trace and the Frobenius norm. Matrix entries are indexed by $[\mathbf{A}]_{m,n}$, corresponding to the element at the $m$-th row and $n$-th column. $\mathbf{I}_{N}$ denotes an identity matrix of dimension $N \times N$. $\mathbf{0}_{M}$ denotes the $M$-dimensional all-zero vector. $\mathcal{CN}(\mathbf{0},\mathbf{\Lambda})$ denotes the circularly symmetric complex Gaussian (CSCG) distribution with mean zero and covariance matrix $\mathbf{\Lambda}$. $\mathbb{R}$ and $\mathbb{C}$ represent the sets of real and complex numbers, respectively. $\mathrm{vec}\{\mathbf{A}\}$ denotes the vectorization operation that converts matrix $\mathbf{A}$ into a column vector by stacking its columns. $\|\cdot\|_{\mathrm{row},0}$ represents the number of non-zero rows of a matrix.
\section{System Model and Problem Formulation}\label{sec:sys}
\begin{figure*}[!t]
	\centering
	\includegraphics[width=0.75\textwidth]{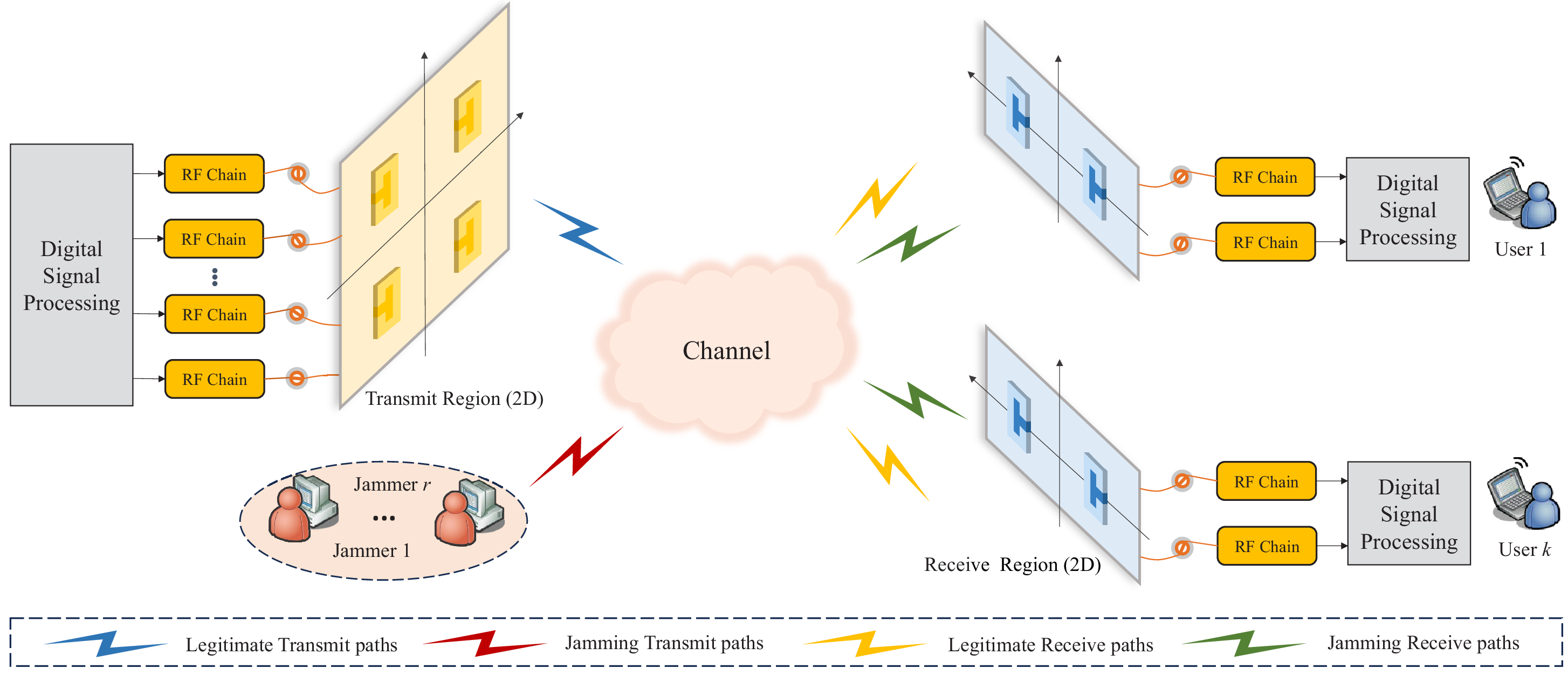}
	\caption{Downlink anti-jamming transmission of FAS-assisted MU-MIMO networks.}
	\label{fig:sysmodel}
	\vspace{-5mm}
\end{figure*}
\subsection{System Model}
As shown in Fig. \ref{fig:sysmodel}, we consider a FAS-assisted MU-MIMO downlink system, where BS equipped with $N$ FAs transmits the desired signal to $K$ users, each equipped with $M$ FAs in the presence of $R$ malicious jammers. We assume that the jammers adopt omnidirectional single antenna to impair the users’ signal receptions from all angles. In addition, a 2D Cartesian coordinate system is established to describe the positions of the transmit FAs at the BS and receive FAs at the users. Specifically, the positions of the $n$-th ($1 \leq n \leq N$) transmit antenna and the $m$-th ($1 \leq m \leq M$) receive antenna of user $k$ ($1 \leq k \leq K$) are denoted by $\mathbf{t}_{n} = [x^{\mathrm{t}}_{n}, y^{\mathrm{t}}_{n}]^{\mathrm{T}} \in \mathcal{C}_{\mathrm{t}}$ and $\mathbf{r}_{k,m} = [x^{\mathrm{r}}_{k,m}, y^{\mathrm{r}}_{k,m}]^{\mathrm{T}} \in \mathcal{C}_{\mathrm{r}, k}$, where $\mathcal{C}_{\mathrm{t}}$ and $\mathcal{C}_{\mathrm{r},k}$ are the given 2D movable regions of transmit and receive FAs, respectively. Without loss of generality, we set $\mathcal{C}_{\mathrm{t}}$ and $\mathcal{C}_{\mathrm{r},k}$ as square regions with size $W_\mathrm{t} \times W_\mathrm{t}$ and $W_\mathrm{r} \times W_\mathrm{r}$, respectively.
\par
Let $\mathbf{T} = [\mathbf{t}_{1}, \cdots, \mathbf{t}_{N}]^{\mathrm{T}}$ and $\mathbf{R}_{k} = [\mathbf{r}_{k, 1}, \cdots, \mathbf{t}_{k, M}]^{\mathrm{T}}$ denote the the positions of the transmit FAs at the BS and the receive FAs at user $k$, respectively. Thus, the channels from the BS to user $k$ and from $r$-th jammer ($1\leq r\leq R$) to user $k$ can be denoted as $\mathbf{H}_{k}(\mathbf{T}, \mathbf{R}_{k}) \in \mathbb{C}^{M \times N}$ and $\mathbf{g}_{r,k}(\mathbf{R}_{k}) \in \mathbb{C}^{M \times 1}$, respectively. The desired signal transmitted to user $k$, denoted as $s_{k} \in \mathbb{C}$, is assumed to be independent random variables with zero mean and unit variance. Prior to transmission, $s_{k}$ should be weighted by the transmit precoder $\mathbf{w}_{k} \in \mathbb{C}^{N \times 1}$, with the precoder being constrained by $\sum_{k=1}^{K}\Vert \mathbf{w}_{k} \Vert^{2} \leq P_{\mathrm{max}}$, where $P_{\mathrm{max}}$ denotes the maximum transmit power. Meanwhile, the $r$-th jammer transmits the jamming signal $x_{\mathrm{J}, r} = \sqrt{p_{\mathrm{J}, r}}s_{\mathrm{J}, r}$ to impair the signal reception, where $s_{\mathrm{J}, r}$ is normalized symbol and ${p_{\mathrm{J}, r}}$ is the corresponding jamming power. As such, the signal received by user $k$ is expressed as
\begin{equation}
	y_{k} = \mathbf{v}^{\mathrm{H}}_{k}\Big(\mathbf{H}_{k}(\mathbf{T}, \mathbf{R}_{k})\mathbf{x} + \sum\nolimits_{r=1}^{R}\mathbf{g}_{r,k}(\mathbf{R}_{k})x_{\mathrm{J}, r} + \mathbf{n}_{k}\Big)
\end{equation}
where $\mathbf{v}_{k} \in \mathbb{C}^{M \times 1}$ is the decoder adopted at the $k$-th user, $\mathbf{x} = \sum_{k=1}^{K}\mathbf{w}_{k}s_{k}$ is the transmit signal at the BS, and $\mathbf{n}_{k} \sim \mathcal{CN}(\mathbf{0}, \sigma^{2}_{k}\mathbf{I}_{M})$ represents the received noise.
\subsection{Channel Model}
In this subsection, we adopt field-response-based representation for channel modeling \cite{MA2}. Specifically, let $L^{\mathrm{t}}_{k}$ and $L^{\mathrm{r}}_{k}$ denote the numbers of legitimate transmit and receive paths, respectively. For the $p$-th path ($p=1,\cdots,L^{\mathrm{t}}_{k}$) of user $k$, the corresponding elevation and azimuth angles of departure (AoDs) are represented as $\theta^{\mathrm{t}}_{k, p}$ and $\phi^{\mathrm{t}}_{k, p}$, respectively. Then, the propagation difference  $\rho^{\mathrm{t}}_{k, p}(\mathbf{t}_{n})$ between the $n$-th transmit antenna and the reference point $\mathbf{o}_{\mathrm{t}} = [0, 0]^{\mathrm{T}}$ is given by $\rho^{\mathrm{t}}_{k, p}(\mathbf{t}_{n}) = x^{\mathrm{t}}_{n}\varphi^{\mathrm{t}}_{k, p} + y^{\mathrm{t}}_{n}\vartheta^{\mathrm{t}}_{k, p}$. Here, $\varphi^{\mathrm{t}}_{k, p} = \cos\theta^{\mathrm{t}}_{k, p}\sin\phi^{\mathrm{t}}_{k, p}$ and $\vartheta^{\mathrm{t}}_{k, p} = \sin\theta^{\mathrm{t}}_{k, p}$.
\par
As such, the phase difference of $k$-th user's $p$-th transmit path, relative to the reference point $\mathbf{o}_{\mathrm{t}}$, can be further expressed as $2\pi\rho^{\mathrm{t}}_{k, p}(\mathbf{t}_{n})/\lambda$, where $\lambda$ is wavelength. Therefore, the transmit field response vector for the $n$-th transmit FA is given by
\begin{equation}
	\mathbf{g}_{k}(\mathbf{t}_{n}) = \Big[e^{j\frac{2\pi}{\lambda}\rho^{\mathrm{t}}_{k, 1}(\mathbf{t}_{n})}, \cdots, e^{j\frac{2\pi}{\lambda}\rho^{\mathrm{t}}_{k, L^{\mathrm{t}}_{k}}(\mathbf{t}_{n})}\Big]^{\mathrm{T}}.
\end{equation}
\par
Similarly, the elevation and azimuth angles of arrival (AoAs) for the $q$-th ($q = 1, \cdots, L^{\mathrm{r}}_{k}$) receive path of user $k$ are denoted as $\theta^{\mathrm{r}}_{k, q}$ and $\phi^{\mathrm{r}}_{k, q}$, respectively. Then, the receive field response vector for the $m$-th receive FA of the $k$-the user is given by
\begin{equation}
	\mathbf{f}_{k}(\mathbf{r}_{k, m}) = \Big[e^{j\frac{2\pi}{\lambda}\rho^{\mathrm{r}}_{k, 1}(\mathbf{r}_{k, m})}, \cdots, e^{j\frac{2\pi}{\lambda}\rho^{\mathrm{r}}_{k, L^{\mathrm{t}}_{k}}(\mathbf{r}_{k, m})}\Big]^{\mathrm{T}}
\end{equation}
where $\rho^{\mathrm{r}}_{k, q}(\mathbf{r}_{k, m}) = x^{\mathrm{r}}_{k, m}\varphi^{\mathrm{r}}_{k, q} + y^{\mathrm{r}}_{k, m}\vartheta^{\mathrm{r}}_{k, q}$ with $\varphi^{\mathrm{r}}_{k, q} = \cos\theta^{\mathrm{r}}_{k, q}\sin\phi^{\mathrm{r}}_{k, q}$ and $\vartheta^{\mathrm{r}}_{k, q} = \sin\theta^{\mathrm{r}}_{k, q}$ quantifies the propagation delay for the $q$-th receive path relative to the reference point $\mathbf{o}_{\mathrm{r}, k} = [0, 0]^{\mathrm{T}}$.
\par
Next, the path response matrix (PRM) between the reference points of the transmit and $k$-th user's receive regions, i.e., $\mathbf{o}_{\mathrm{t}}$ and $\mathbf{o}_{\mathrm{r}, k}$, is denoted as $\mathbf{\Sigma}_{k} \in \mathbb{C}^{L^{\mathrm{r}}_{k} \times L^{\mathrm{t}}_{k}}$, with its entry characterizing the response between the corresponding transmit and receive paths. With above definitions, the channel matrix between the transmitter and $k$-th receiver can be accordingly expressed as
\begin{equation}\label{Eq:Channel}
	\mathbf{H}_{k} = \mathbf{F}^{\mathrm{H}}_{k}(\mathbf{R}_{k})\mathbf{\Sigma}_{k}\mathbf{G}_{k}(\mathbf{T})
\end{equation}
where $\mathbf{F}_{k}(\mathbf{R}_{k}) = \big[\mathbf{f}_{k}(\mathbf{r}_{k, 1}), \cdots, \mathbf{f}_{k}(\mathbf{r}_{k, M})\big] \in \mathbb{C}^{L^{\mathrm{r}}_{k} \times M}$ and $\mathbf{G}_{k}(\mathbf{T}) = \big[\mathbf{g}_{k}(\mathbf{t}_{1}), \cdots, \mathbf{g}_{k}(\mathbf{t}_{N})\big] \in \mathbb{C}^{L^{\mathrm{t}}_{k} \times N}$.
\par
Finally, we describe the channel matrix $\mathbf{g}_{r,k}(\mathbf{R}_{k})$ between the $r$-th jammer and $k$-th user. Similar to \eqref{Eq:Channel}, $\mathbf{g}_{r,k}(\mathbf{R}_{k})$ can be expressed as
\begin{equation}
	\mathbf{g}_{r,k}(\mathbf{R}_{k}) = \mathbf{F}^{\mathrm{H}}_{r, k}(\mathbf{R}_{k})\mathbf{\Sigma}_{r, k}\mathbf{1}_{L^{\mathrm{t}}_{r,k}}
\end{equation}
where $\mathbf{F}_{r, k}(\mathbf{R}_{k}) = \big[\mathbf{f}_{r,k}(\mathbf{r}_{k, 1}), \cdots, \mathbf{f}_{r,k}(\mathbf{r}_{k, M})\big]$ denotes field response matrix at the user $k$ with respect to the $r$-th jammer, and $\mathbf{f}_{r,k}(\mathbf{r}_{k, m}) = \big[e^{j\frac{2\pi}{\lambda}\rho^{\mathrm{r}}_{r,k,1}(\mathbf{r}_{k, m})}, \cdots, e^{j\frac{2\pi}{\lambda}\rho^{\mathrm{r}}_{r,k,L^{\mathrm{r}}_{r,k}}(\mathbf{r}_{k, m})}\big]^{\mathrm{T}}$. Therein, denote $L^{\mathrm{t}}_{r,k}$ and $L^{\mathrm{r}}_{r,k}$ as the number of transmit and receive paths from the $r$-th jammer to $k$-th user, respectively. In addition, $\theta^{\mathrm{r}}_{r,k,l}$ and $\phi^{\mathrm{r}}_{r,k,l}$ ($l = 1,\cdots,L^{\mathrm{r}}_{r,k}$) represent the corresponding elevation and azimuth AoAs for $l$-th receive path between jammer $r$ and user $k$, respectively. As such, we have $\rho^{\mathrm{r}}_{r,k,l}(\mathbf{r}_{k, m}) = x^{\mathrm{r}}_{k, m}\varphi^{\mathrm{r}}_{r, k, l} + y^{\mathrm{r}}_{k, m}\vartheta^{\mathrm{r}}_{r, k, l}$ with $\varphi^{\mathrm{r}}_{r, k, l} = \cos\theta^{\mathrm{r}}_{r, k, l}\sin\phi^{\mathrm{r}}_{r, k, l}$ and $\vartheta^{\mathrm{r}}_{r, k, l} = \sin\theta^{\mathrm{r}}_{r, k, l}$. Furthermore, $\mathbf{\Sigma}_{r, k} \in \mathbb{C}^{L^{\mathrm{r}}_{r,k} \times L^{\mathrm{t}}_{r,k}}$ represents the PRM between all transmit and receive paths.
\par
Since the jammers are not expected to cooperate with the legitimate nodes for channel estimation, the jammers' azimuth and elevation AoAs can be only estimated by detecting the jamming power, thus leading to imperfect jammers’ CSI. To account for this, we assume that $\mathbf{g}_{r,k}$ has a given azimuth/elevation AoA range \cite{10847939}, which is expressed as
\begin{equation}
	\Delta = \big\{\mathbf{g}_{r,k} \lvert \theta \in [\theta_{{L}}, \theta_{{U}}], \phi \in [\phi_{{L}}, \phi_{{U}}]\big\}
\end{equation}
where $\theta_{{L}}$ and $\theta_{{U}}$ denote the lower and upper bounds of azimuth angle, $\phi_{{L}}$ and $\phi_{{U}}$ denote the lower and upper bounds of elevation angle, respectively.
\subsection{Problem Formulation}
The received SINR of user $k$ is given by 
\begin{equation}
	\gamma_{k} = \frac{\vert \mathbf{v}^{\mathrm{H}}_{k}\mathbf{H}_{k}\mathbf{w}_{k} \vert^2}{\sum_{i \neq k}^{K}\vert \mathbf{v}^{\mathrm{H}}_{k}\mathbf{H}_{k}\mathbf{w}_{i} \vert^2 + \sum_{r=1}^{R}p_{\mathrm{J}, r}\vert \mathbf{v}^{\mathrm{H}}_{k}\mathbf{g}_{r, k} \vert^{2} + \widetilde{\sigma}^{2}_{k}}
\end{equation}
where $\widetilde{\sigma}^{2}_{k} = {\sigma}^{2}_{k}\Vert \mathbf{v}_{k}\Vert^{2}$. As such, the information rate of user $k$ can be expressed as $R_{k} = \log(1 + \gamma_{k})$. Next, a worst-case sum-rate maximization problem can be formulated as 
\begin{align}\label{Opt:Prop1}
	\mathop{\max}_{\mathbf{w}_{k}, \mathbf{v}_{k}, \mathbf{T}, \mathbf{R}_{k}} \ &  \mathop{\min}_{\Delta} \ \sum\nolimits_{k=1}^{K} R_{k} \\
	\text{s.t.} \ \mathrm{C1:} &  \min_{\Delta} \ R_{k} \geq \Gamma_{k} \notag \\
	\mathrm{C2:} & \sum\nolimits_{k=1}^{K} \Vert \mathbf{w}_{k} \Vert^{2} \leq P_{\mathrm{max}} \notag \\
	\mathrm{C3:} & \mathbf{T} \in \mathcal{C}_{\mathrm{t}} \notag \\
	\mathrm{C4:} & \mathbf{R}_{k} \in \mathcal{C}_{\mathrm{r}, k} \notag \\
	\mathrm{C5:} & \Vert\mathbf{t}_{n} - \mathbf{t}_{n'}\Vert \geq D , 1 \leq n \neq n' \leq N \notag \\
	\mathrm{C6:} & \Vert\mathbf{r}_{k, m} - \mathbf{r}_{k,m'}\Vert \geq D, \forall k, 1 \leq m \neq m' \leq M. \notag
\end{align}
Here, C1 is the minimum rate constraint with the $k$-th user’s target $\Gamma_{k}$, C2 is the maximum total power constraint at BS, C3--C4 guarantee the FAs at the BS and users remain within the movable regions, and C5--C6 prevent coupling effects between any pair of FAs at the BS and users, respectively.
\section{Continuous Antenna Position Design: An Alternating Optimization Framework}\label{sec:conti}
In this section, we propose a comprehensive optimization framework for continuous antenna position design in FAS-aided anti-jamming communications. This framework serves two purposes. First, it provides a rigorous solution for scenarios where continuous positioning is feasible. Second, it establishes a baseline for comparison with the discrete design introduced in \textbf{Section~\ref{sec:discre}}. The continuous formulation is inherently non-convex and thus necessitates AO. The resulting computational overhead and potential for local optima present a limitation, which will be systematically examined and contrasted with the discrete paradigm.
\subsection{The Proposed AO Framework}
We adopt an alternating optimization framework to solve problem \eqref{Opt:Prop1} by decomposing it into four subproblems, corresponding to the decoder $\mathbf{v}_{k}$, the receive antenna positions $\mathbf{R}_{k}$, the precoder $\mathbf{w}_{k}$, and the transmit antenna positions $\mathbf{T}$. At each iteration, we sequentially update each variable block while keeping the others fixed, leveraging the MMSE criterion, Dinkelbach's method, SCA, and the MM framework as appropriate.
\subsection{Heuristic Robust Decoder and Continuous Receive Antenna Position Design}\label{Sec:ThreeA}
\subsubsection{\textbf{MMSE-Based Heuristic Robust Decoder Design}} 
First, we focus on investigating the design of $\mathbf{v}_{k}$. As we all known, the linear MMSE detector is the optimal solution for $\mathbf{v}_{k}$. However, the infinite angular CSI in $\Delta$ prevent us obtaining the closed-form solution for $\mathbf{v}_{k}$. To solve this, we have following proposition to handle the CSI imperfection $\Delta$.

\begin{proposition}\label{Prop:Discre}
	By uniformly discretizing all the angles inside $\Delta$, i.e.,
	\begin{align}
		& \theta^{(i)} = \theta_{L} + (i-1)\Delta\theta, i = 1,\cdots,Q_{1} \notag \\
		& \phi^{(j)} = \phi_{L} + (j-1)\Delta\phi, j = 1,\cdots,Q_{2}
	\end{align}
	where $Q_{1}$ and $Q_{2}$ are the number of samples of $\theta$ and $\phi$, $\Delta\theta = (\theta_{U} - \theta_{L}) / (Q_{1} - 1)$, and $\Delta\phi = (\phi_{U} - \phi_{L}) / (Q_{2} - 1)$, the worst-case CSI $\mathbf{g}_{r, k}$ inside $\gamma_{k}$ can be obtained by
	\begin{equation}
		\widetilde{\mathbf{g}}_{r, k}\widetilde{\mathbf{g}}^{\mathrm{H}}_{r, k} = \sum\nolimits_{i=1}^{Q_{1}}\sum\nolimits_{j=1}^{Q_{2}}\mu_{i,j}\widehat{\mathbf{g}}^{(i,j)}_{r, k}\widehat{\mathbf{g}}^{(i,j), \mathrm{H}}_{r, k}
	\end{equation}
	where $\widehat{\mathbf{g}}^{(i,j)}_{r, k}$ is the selected element of $\{\theta^{(i)}, \phi^{(j)}\}$. According to \cite{10847939}, a satisfactory robustness can be achieved when $\mu_{i,j} = 1 / Q_{1}Q_{2}$.
\end{proposition}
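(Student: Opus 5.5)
This is a heuristic proposition: the matrix $\widetilde{\mathbf{g}}_{r,k}\widetilde{\mathbf{g}}^{\mathrm{H}}_{r,k}$ is a surrogate for the worst-case jamming covariance rather than a quantity with a unique definition. So the plan is to justify it as the outcome of a relaxation of the inner minimization over $\Delta$, not to derive it as an identity.

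\textbf{Step 1: reduce the inner problem to a covariance.} With $\mathbf{w}_k$, $\mathbf{T}$, $\mathbf{R}_k$ fixed, $\gamma_k$ is monotonically decreasing in the jamming term $\sum_r p_{\mathrm{J},r}\,\mathbf{v}_k^{\mathrm{H}}\mathbf{g}_{r,k}\mathbf{g}_{r,k}^{\mathrm{H}}\mathbf{v}_k$. Therefore $\min_\Delta R_k$ is attained by maximizing
\begin{equation*}
\mathbf{v}_k^{\mathrm{H}}\mathbf{g}_{r,k}\mathbf{g}_{r,k}^{\mathrm{H}}\mathbf{v}_k
\end{equation*}
over the AoA box separately for each $r$. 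The jammer enters the SINR only through the rank-one matrix $\mathbf{J}_{r,k}=\mathbf{g}_{r,k}\mathbf{g}_{r,k}^{\mathrm{H}}$. Any substitute matrix placed there keeps the MMSE decoder in closed form:
\begin{equation*}
\mathbf{v}_k\propto\Big(\sum_i \mathbf{H}_k\mathbf{w}_i\mathbf{w}_i^{\mathrm{H}}\mathbf{H}_k^{\mathrm{H}}+\sum_r p_{\mathrm{J},r}\mathbf{J}_{r,k}+\sigma_k^2\mathbf{I}\Big)^{-1}\mathbf{H}_k\mathbf{w}_k .
\end{equation*}

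\textbf{Step 2: discretize and relax.} Sample the continuous box on the stated uniform grid $\{\theta^{(i)},\phi^{(j)}\}$. Because $\mathbf{g}_{r,k}$ depends continuously on the angles through the phase terms $\rho^{\mathrm{r}}_{r,k,l}$, the grid maximum converges to the continuous maximum as $Q_1,Q_2\to\infty$. Next, relax the discrete choice ``pick one grid point'' to its convex hull, namely weights $\mu_{i,j}\ge0$ with $\sum_{i,j}\mu_{i,j}=1$. The objective $\mathbf{v}_k^{\mathrm{H}}\mathbf{J}\mathbf{v}_k$ is linear in $\mathbf{J}$, so for fixed $\mathbf{v}_k$ the worst case over the hull is still attained at a vertex. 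This shows the convex-combination form $\sum\mu_{i,j}\widehat{\mathbf{g}}^{(i,j)}\widehat{\mathbf{g}}^{(i,j),\mathrm{H}}$ contains the true worst case for each fixed decoder.

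\textbf{Step 3: argue for uniform weights (the main obstacle).} Choosing the vertex depends on the unknown $\mathbf{v}_k$, which makes the problem a max–min with no closed form. The plan is to invoke a minimax argument. Over $\mathbf{v}_k$ the game is not convex, so I would instead appeal to the robust-design reasoning of \cite{10847939}: uniform $\mu_{i,j}=1/(Q_1Q_2)$ maximizes the trace/spread of the surrogate covariance over the uncertainty region. The decoder then places attenuation over the whole angular sector rather than on one direction, which mimics the response to the least favourable mixed strategy of the jammer. I expect this step to be rigorous only heuristically. For that reason the statement says ``satisfactory robustness'' and cites \cite{10847939}, and I would present it as a design choice supported by the numerical results rather than as a proven optimum.
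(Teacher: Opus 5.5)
The paper gives no argument of its own here; its proof is only a pointer to \cite{10847939}. Your overall stance is therefore consistent with what the paper claims: a convex-combination surrogate for the jammer covariance, with the uniform weights $\mu_{i,j}=1/(Q_1Q_2)$ defended only as a cited heuristic. Your Step~1 and the grid-convergence part of Step~2 are fine.

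The gap is in how you connect the two. Your vertex argument only shows that, for a \emph{fixed} $\mathbf v_k$, the worst case over the hull is a one-hot $\mu$, and that $\mu$ itself depends on $\mathbf v_k$. It does not explain why a single decoder-independent matrix $\sum_{i,j}\mu_{i,j}\widehat{\mathbf g}^{(i,j)}_{r,k}\widehat{\mathbf g}^{(i,j),\mathrm H}_{r,k}$ may be inserted into the MMSE decoder. You then dismiss the minimax route by asserting that the game over $\mathbf v_k$ is not convex. That assertion is false, and the minimax route is exactly what closes the gap.

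Write $\mathbf h=\mathbf H_k\mathbf w_k$ and $\mathbf A=\sum_{i\neq k}\mathbf H_k\mathbf w_i\mathbf w_i^{\mathrm H}\mathbf H_k^{\mathrm H}+\sigma_k^2\mathbf I_M\succ\mathbf 0$. Let $\mathcal J$ be the compact convex set of $\mathbf J=\sum_r p_{\mathrm J,r}\mathbf J_r$, with each $\mathbf J_r$ in the convex hull of the sampled rank-one matrices. Because the SINR is scale-invariant in $\mathbf v_k$, worst-case SINR maximization is equivalent to
\begin{equation*}
\min_{\mathbf v:\,\mathbf h^{\mathrm H}\mathbf v=1}\ \max_{\mathbf J\in\mathcal J}\ \mathbf v^{\mathrm H}(\mathbf A+\mathbf J)\mathbf v .
\end{equation*}
This objective is convex in $\mathbf v$ over an affine set and linear in $\mathbf J$ over a compact convex set. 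Sion's minimax theorem, together with coercivity from $\mathbf A\succ\mathbf 0$, yields a saddle point with:
\begin{itemize}
\item $\mathbf J^\star=\arg\min_{\mathbf J\in\mathcal J}\mathbf h^{\mathrm H}(\mathbf A+\mathbf J)^{-1}\mathbf h$, which is a convex program in the weights;
\item $\mathbf v^\star\propto(\mathbf A+\mathbf J^\star)^{-1}\mathbf h$.
\end{itemize}
Adding $\mathbf h\mathbf h^{\mathrm H}$ inside the inverse only rescales this. So $\mathbf v^\star$ is precisely the MMSE decoder with a convex combination of sampled jammer covariances in place of $\widetilde{\mathbf g}_{r,k}\widetilde{\mathbf g}_{r,k}^{\mathrm H}$. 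That is the rigorous content behind the proposition's form.

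The same argument also pins down what uniform weights are. The least-favourable $\mu^\star$ is generally non-uniform. Moreover, for any fixed decoder,
\begin{equation*}
\sum_{i,j}\mu_{i,j}|\mathbf v_k^{\mathrm H}\widehat{\mathbf g}^{(i,j)}_{r,k}|^2\le\max_{i,j}|\mathbf v_k^{\mathrm H}\widehat{\mathbf g}^{(i,j)}_{r,k}|^2 ,
\end{equation*}
so the uniform surrogate is an average, not a worst case. It can only be justified heuristically, as the paper does.

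Finally, drop the claim that uniform $\mu$ maximizes the trace of the surrogate. We have $\tr\big(\sum_{i,j}\mu_{i,j}\widehat{\mathbf g}^{(i,j)}_{r,k}\widehat{\mathbf g}^{(i,j),\mathrm H}_{r,k}\big)=\sum_{i,j}\mu_{i,j}\|\widehat{\mathbf g}^{(i,j)}_{r,k}\|^2$, which is linear in $\mu$. It is therefore maximized at a vertex of the simplex, or constant when all sampled channels have equal norm.
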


\begin{IEEEproof}
	See \cite{10847939}.
\end{IEEEproof}

By using \textbf{Proposition \ref{Prop:Discre}}, we can obtain the robust optimal digital decoder, i.e.,
\begin{equation}\label{Eq:MMSE}
	\mathbf{v}_{k} = \Big(\!\mathbf{H}_{k}\mathbf{W}\mathbf{W}^{\mathrm{H}}\mathbf{H}^{\mathrm{H}}_{k} + \sum\nolimits_{r=1}^{R}p_{\mathrm{J}, r}\widetilde{\mathbf{g}}_{r, k}\widetilde{\mathbf{g}}^{\mathrm{H}}_{r, k} + {\sigma}^{2}_{k}\mathbf{I}_{M}\!\Big)^{-1}\mathbf{H}_{k}\mathbf{w}_{k}
\end{equation}
where $\mathbf{W} = [\mathbf{w}_{1}, \mathbf{w}_{2}, \cdots, \mathbf{w}_{K}] \in \mathbb{C}^{N \times K}$.

\subsubsection{\textbf{SCA-Based Robust Receive Antenna Position Design}}
Now, we turn to investigate the optimization of receive antenna position $\mathbf{R}_{k}$, whose corresponding subproblem can be formulated as
\begin{align}\label{Opt:Prop2}
	\mathop{\max}_{\mathbf{R}_{k}} \ &  \mathop{\min}_{\Delta} \ \gamma_{k} \\
	\text{s.t.} \ & \mathrm{C4}, \mathrm{C6}. \notag
\end{align}
As for the term $\mathop{\min}_{\Delta}$ inside \eqref{Opt:Prop2}, it can be removed by substituting \textbf{Proposition \ref{Prop:Discre}} into problem \eqref{Opt:Prop2}, resulting in the worst-case optimization subproblem \eqref{Opt:Prop2}. Then, we adopt the Dinkelbach’s method to transform \eqref{Opt:Prop2} into an equivalent form, i.e.,
\begin{align}\label{Opt:Prop3}
	\mathop{\min}_{\mathbf{R}_{k}, \kappa} \ & f(\mathbf{R}_{k}, \kappa) \\
	\text{s.t.} \ & \mathrm{C4}, \mathrm{C6}\notag
\end{align}
where $f(\mathbf{R}_{k}, \kappa)$ is defined as 
\begin{align}\label{Eq:RecAntenna}
	f(\mathbf{R}_{k}, \kappa) = & \kappa\sum_{i \neq k}^{K}\vert \mathbf{v}^{\mathrm{H}}_{k}\mathbf{H}_{k}(\mathbf{R}_{k})\mathbf{w}_{i} \vert^2 - \vert \mathbf{v}^{\mathrm{H}}_{k}\mathbf{H}_{k}(\mathbf{R}_{k})\mathbf{w}_{k} \vert^2 \notag \\
	& + \kappa\sum_{r=1}^{R}p_{\mathrm{J}, r}\mathbf{v}^{\mathrm{H}}_{k}\widetilde{\mathbf{g}}_{r, k}(\mathbf{R}_{k})\widetilde{\mathbf{g}}^{\mathrm{H}}_{r, k}(\mathbf{R}_{k})\mathbf{v}_{k}.
\end{align}
The subproblem w.r.t $\kappa$ admits $\kappa = \widetilde{\gamma}_{k}$, where $\widetilde{\gamma}_{k}$ is obtained by replacing $\mathbf{g}_{r,k}\mathbf{g}^{\mathrm{H}}_{r,k}$ with $\widetilde{\mathbf{g}}_{r,k} \widetilde{\mathbf{g}}_{r,k}^{\mathrm{H}}$ in the expression of $\gamma_k$.
Since the problem w.r.t $\mathbf{R}_{k}$ is still non-convex and difficult to solve, we resort to SCA \cite{11142587}. Denote $\mathbf{r}_{k} = \mathrm{vec}\{\mathbf{R}_{k}\}$, we have the following proposition to transform the \eqref{Opt:Prop3} into a convex one.

\begin{proposition}\label{Prop:SCA}
	For arbitrary vectors $\mathbf{a} \in \mathbb{C}^{M \times 1}$ and $\mathbf{b} \in \mathbb{C}^{L^{\mathrm{r}}_{k} \times 1}$, concave and convex quadratic surrogate functions for $\vert \mathbf{b}^{\mathrm{H}}\mathbf{F}_{k}(\mathbf{R}_{k})\mathbf{a} \vert^{2}$ can be expressed as $\mathbf{r}_{k}^{\mathrm{T}} \mathbf{U}_{1} \mathbf{r}_{k} - \mathbf{r}_{k}^{\mathrm{T}}\bm{\nu}_{1} + d_{1}$ and $\mathbf{r}_{k}^{\mathrm{T}} \mathbf{U}_{2} \mathbf{r}_{k} - \mathbf{r}_{k}^{\mathrm{T}}\bm{\nu}_{2} + d_{2}$, respectively.
\end{proposition}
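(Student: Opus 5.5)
The plan is to write $|\mathbf{b}^{\mathrm{H}}\mathbf{F}_{k}(\mathbf{R}_{k})\mathbf{a}|^{2}$ as an explicit real trigonometric function of $\mathbf{r}_{k}$, and then build second-order Taylor-type surrogates around the current iterate $\mathbf{r}_{k}^{(t)}$ whose curvature is fixed by a uniform bound on the Hessian. Expand
\begin{equation*}
\mathbf{b}^{\mathrm{H}}\mathbf{F}_{k}\mathbf{a}=\sum_{q=1}^{L^{\mathrm{r}}_{k}}\sum_{m=1}^{M} b_{q}^{*}a_{m}e^{j\frac{2\pi}{\lambda}\rho^{\mathrm{r}}_{k,q}(\mathbf{r}_{k,m})}.
\end{equation*}
This gives
\begin{equation*}
h(\mathbf{r}_{k})\triangleq|\mathbf{b}^{\mathrm{H}}\mathbf{F}_{k}\mathbf{a}|^{2}=\sum_{q,q'}\sum_{m,m'}|c_{q,m}||c_{q',m'}|\cos\!\big(\tfrac{2\pi}{\lambda}(\rho^{\mathrm{r}}_{k,q}(\mathbf{r}_{k,m})-\rho^{\mathrm{r}}_{k,q'}(\mathbf{r}_{k,m'}))+\angle c_{q,m}-\angle c_{q',m'}\big),
\end{equation*}
with $c_{q,m}=b_{q}^{*}a_{m}$. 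Each phase $\rho^{\mathrm{r}}_{k,q}(\mathbf{r}_{k,m})=x^{\mathrm{r}}_{k,m}\varphi^{\mathrm{r}}_{k,q}+y^{\mathrm{r}}_{k,m}\vartheta^{\mathrm{r}}_{k,q}$ is linear in $\mathbf{r}_{k}$. Hence $h$ is a finite sum of cosines of affine functions of $\mathbf{r}_{k}$, and it is smooth.

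Next I compute the gradient $\nabla h(\mathbf{r}_{k}^{(t)})$ and the Hessian $\nabla^{2}h$ in closed form. Every cosine term $\alpha\cos(\mathbf{e}^{\mathrm{T}}\mathbf{r}_{k}+\beta)$ has Hessian $-\alpha\cos(\cdot)\,\mathbf{e}\mathbf{e}^{\mathrm{T}}$. Its spectral norm is at most $|\alpha|\,\|\mathbf{e}\|^{2}$, and $\|\mathbf{e}\|^{2}$ is bounded because $|\varphi|,|\vartheta|\le 1$, which gives $\|\mathbf{e}\|^{2}\le 2(2\pi/\lambda)^{2}\cdot 2$. Summing these bounds yields a constant $\delta\ge\|\nabla^{2}h(\mathbf{r}_{k})\|_{2}$ that holds for all $\mathbf{r}_{k}$, for instance
\begin{equation*}
\delta=\tfrac{16\pi^{2}}{\lambda^{2}}\Big(\textstyle\sum_{q,m}|c_{q,m}|\Big)^{2},
\end{equation*}
or a tighter Frobenius-norm bound. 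This uniform Hessian bound is the step I expect to be the main obstacle. Its difficulty lies less in the algebra than in getting the correct uniform constant in the coupled $(m,m')$ cross terms, and in noting that the $m=m'$ terms only involve the differences between path directions.

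With $\delta$ in hand, Taylor's theorem with the bound $-\delta\mathbf{I}\preceq\nabla^{2}h\preceq\delta\mathbf{I}$ gives two quadratic bounds:
\begin{equation*}
h(\mathbf{r}_{k})\ge h(\mathbf{r}_{k}^{(t)})+\nabla h(\mathbf{r}_{k}^{(t)})^{\mathrm{T}}(\mathbf{r}_{k}-\mathbf{r}_{k}^{(t)})-\tfrac{\delta}{2}\|\mathbf{r}_{k}-\mathbf{r}_{k}^{(t)}\|^{2},
\end{equation*}
and the same expression with $+\tfrac{\delta}{2}$ as an upper bound. Both surrogates are tight at $\mathbf{r}_{k}^{(t)}$ and match the gradient there, which are the usual SCA/MM conditions. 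The first is concave and the second convex.

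Finally I expand the squares to reach the stated form. For the concave surrogate, $\mathbf{U}_{1}=-\tfrac{\delta}{2}\mathbf{I}$, $\bm{\nu}_{1}=-\nabla h(\mathbf{r}_{k}^{(t)})-\delta\mathbf{r}_{k}^{(t)}$, and $d_{1}=h(\mathbf{r}_{k}^{(t)})-\nabla h(\mathbf{r}_{k}^{(t)})^{\mathrm{T}}\mathbf{r}_{k}^{(t)}-\tfrac{\delta}{2}\|\mathbf{r}_{k}^{(t)}\|^{2}$. The convex surrogate is analogous, with $\mathbf{U}_{2}=\tfrac{\delta}{2}\mathbf{I}$ and the corresponding sign changes in $\bm{\nu}_{2}$ and $d_{2}$.

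The concave lower bound is used for the desired-signal term, which enters $f$ with a minus sign. The convex upper bound is used for the interference and jamming terms. Because $\widetilde{\mathbf{g}}_{r,k}\widetilde{\mathbf{g}}^{\mathrm{H}}_{r,k}$ is a sum of rank-one terms by Proposition \ref{Prop:Discre}, the jamming term is also a sum of such squared moduli.
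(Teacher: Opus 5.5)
Your argument is correct, but it differs from the paper's at the step where the curvature is chosen.

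\textbf{What the paper does.} It uses the same expansion of $|\mathbf{b}^{\mathrm{H}}\mathbf{F}_{k}(\mathbf{R}_{k})\mathbf{a}|^{2}$ into a nonnegatively weighted sum of cosines $\cos(\bm{\xi}^{\mathrm{T}}\mathbf{r}_{k}+\phi)$ with affine phases. It then bounds each cosine separately using only $|\cos''|\le 1$, i.e.\ $\cos\psi\le\cos\psi_{0}-\sin\psi_{0}(\psi-\psi_{0})+\tfrac{1}{2}(\psi-\psi_{0})^{2}$, together with the mirror-image lower bound. Its quadratic part is therefore the anisotropic matrix $\tfrac{1}{2}\sum|b_{l_{1}}b_{l_{2}}a_{m_{1}}a_{m_{2}}|\,\bm{\xi}\bm{\xi}^{\mathrm{T}}$, not your $\pm\tfrac{\delta}{2}\mathbf{I}$.

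\textbf{How the two compare.} The linear and constant parts of the two constructions coincide, since both are the first-order Taylor expansion of $h$ at $\mathbf{r}_{k}^{(t)}$. Writing $\alpha$ for the nonnegative weights, $\lambda_{\max}\big(\sum\alpha\,\bm{\xi}\bm{\xi}^{\mathrm{T}}\big)\le\sum\alpha\|\bm{\xi}\|^{2}\le\delta$. Hence the paper's surrogates lie between $h$ and yours.
\begin{itemize}
\item The paper's surrogates are tighter, because they put curvature only along the phase directions that actually occur. The cost is assembling a sum over all quadruples $(l_{1},l_{2},m_{1},m_{2})$ in every iteration.
\item Your version needs only $\nabla h(\mathbf{r}_{k}^{(t)})$ and one global constant, and avoids the per-term bookkeeping. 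It gives scaled-identity $\mathbf{U}_{1},\mathbf{U}_{2}$, so the aggregated SCA objective is strongly convex.
\item The price is that a worst-case $\delta$ is typically much larger than the local curvature. Your SCA steps are therefore shorter and convergence is slower.
\end{itemize}

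\textbf{A small fix to your constant.} The bounds $|\varphi|,|\vartheta|\le 1$ alone only give $\|\mathbf{e}\|^{2}\le 8(2\pi/\lambda)^{2}$ for the $m=m'$ terms. To justify $\delta=\tfrac{16\pi^{2}}{\lambda^{2}}\big(\sum_{q,m}|c_{q,m}|\big)^{2}$, use $\varphi^{2}+\vartheta^{2}=\cos^{2}\theta\sin^{2}\phi+\sin^{2}\theta\le 1$. With $\mathbf{p}_{q}=[\varphi^{\mathrm{r}}_{k,q},\vartheta^{\mathrm{r}}_{k,q}]^{\mathrm{T}}$, this gives $\|\mathbf{p}_{q}-\mathbf{p}_{q'}\|^{2}\le 4$. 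Alternatively, simply double $\delta$, since any uniform bound suffices for your argument.
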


\begin{IEEEproof}
	See Appendix \ref{App:A}.
\end{IEEEproof}

Building upon \textbf{Proposition~\ref{Prop:SCA}}, we can recast problem \eqref{Opt:Prop3} as the following problem
\begin{align}\label{Opt:Prop4}
	\mathop{\min}_{\mathbf{r}_{k}} \ & \mathbf{r}_{k}^{\mathrm{T}} \mathbf{U}_{k} \mathbf{r}_{k} - \mathbf{r}_{k}^{\mathrm{T}}\bm{\nu}_{k}  \\
	\text{s.t.} \ & \mathrm{C4}, \mathrm{\overline{C}6:} \frac{({\mathbf{r}}^{(i_{d})}_{k,m} - {\mathbf{r}}^{(i_{d})}_{k,m'})({\mathbf{r}}_{k,m} - {\mathbf{r}}_{k,m'})^{\mathrm{T}}}{\Vert {\mathbf{r}}^{(i_{d})}_{k,m} - {\mathbf{r}}^{(i_{d})}_{k,m'} \Vert} \geq D \notag
\end{align}
where ${\mathbf{r}}^{(i_{d})}_{k,m}$ is the solution obtained in the previous iteration, $\mathbf{U}_{k}$ and $\bm{\nu}_{k}$ are obtained by applying \textbf{Proposition~\ref{Prop:SCA}} term-by-term to \eqref{Eq:RecAntenna} and aggregating the resulting surrogate terms, and constraint $\mathrm{\overline{C}6}$ is obtained via first-order Taylor expansion.

Under the iterative framework, $\kappa$ and $\mathbf{R}_{k}$ can be optimized until converging to a stationary point \cite{10296481}, which is summarized in \textbf{Algorithm \ref{Alg:SCA}}. 

\begin{algorithm}[tb]
	\SetAlgoLined
	\SetKwInOut{Input}{Input}\SetKwInOut{Output}{Output}
	
	\caption{SCA Algorithm for Antenna Position Optimization}
	\label{Alg:SCA}
	
	\Input{$\mathbf{W}$, $\mathbf{v}_{k}$, $\mathbf{T}$}
	\Output{$\mathbf{R}_{k}$}
	
	\BlankLine
	
	\Repeat{convergence}{
		Update $\kappa$ via Dinkelbach’s transform\;
		Calculate $\mathbf{U}_{k}$ and $\bm{\nu}_{k}$ according to \textbf{Proposition~\ref{Prop:SCA}}\;
		Update $\mathbf{r}_{k}$ based on the SCA algorithm\;
		Obtain $\mathbf{R}_{k}$ from $\mathbf{r}_{k}$;
	}
\end{algorithm}
\setlength{\textfloatsep}{5pt}
\vspace{-4mm}
\subsection{MM Algorithm for Precoder and Continuous Transmit Antenna Position Design}\label{Sec:ThreeB}
\subsubsection{\textbf{Problem Reformulation}} 
Here, we focus on the optimization of $\mathbf{W}, \mathbf{T}$ with given $\mathbf{v}_{k}, \mathbf{R}_{k}$. With discretization method in \textbf{Proposition~\ref{Prop:Discre}}, the imperfect CSI $\Delta$ can be removed. Then, according to \cite{10054084}, any locally optimal solution must satisfy the transmit power constraint with equality.  This approach allows for solving the problem without the maximum transmit power constraint, followed by scaling the results to comply with the power constraint. In this context, the equivalent worst-case rate expression $\widetilde{R}_{k}$ for user $k$ incorporates the power constraint as follows:
\begin{align}
	\widetilde{R}_{k} = \log\bigg(1 + \frac{\vert \mathbf{h}^{\mathrm{H}}_{k}\mathbf{w}_{k} \vert^{2}}{\sum_{i \neq k}^{K}\vert \mathbf{h}^{\mathrm{H}}_{k}\mathbf{w}_{i} \vert^{2} + {\overline{\sigma}^{2}_{k}}\Vert \mathbf{W}  \Vert^{2}_{F}/P_{\mathrm{max}}}\bigg)
\end{align}
where $\mathbf{h}_{k} = \mathbf{H}^{\mathrm{H}}_{k}\mathbf{v}_{k}$, and $\overline{\sigma}^{2}_{k} = \sum_{r=1}^{R}p_{\mathrm{J}, r}\mathbf{v}^{\mathrm{H}}_{k}\widetilde{\mathbf{g}}_{r, k}\widetilde{\mathbf{g}}^{\mathrm{H}}_{r, k}\mathbf{v}_{k} + \widetilde{\sigma}^{2}_{k}\Vert \mathbf{v}_{k} \Vert^{2}$. Furthermore, the equivalent problem for $\mathbf{W}$ is shown as follows:
\begin{align}\label{Eq:Opt5}
	\mathop{\max}_{\mathbf{W}} \ &  \sum\nolimits_{k=1}^{K} \widetilde{R}_{k} \\
	\text{s.t.} \ &  \mathrm{\overline{C}1:} \ \widetilde{R}_{k} \geq \Gamma_{k}. \notag
\end{align}
Problem \eqref{Eq:Opt5} is still non-convex due to the expression for $\widetilde{R}_{k}$. Thus, the MM technique is adopted to convert $\widetilde{R}_{k}$ into solvable form. According to \cite{8769948}, a lower bound for $\widetilde{R}_{k}$ is given by
\begin{equation}\label{Eq:MM}
	\widetilde{R}_{k} \geq \log(\mathbf{u}^{\mathrm{H}}{\mathbf{D}}^{(i_{d}), -1}_{1,k}\mathbf{u}) - \tr\big({\mathbf{F}}^{(i_{d})}_{1,k}(\mathbf{D}_{1,k}-{\mathbf{D}}^{(i_{d})}_{1,k})\big)
\end{equation}
where 
\begin{align}
	& \mathbf{D}_{1, k} = \left[\begin{matrix}
		1 & \mathbf{w}^{\mathrm{H}}_{k}\mathbf{h}_{k} \\
		\mathbf{h}^{\mathrm{H}}_{k}\mathbf{w}_{k} & \mathbf{h}^{\mathrm{H}}_{k}\mathbf{W}\mathbf{W}^{\mathrm{H}}\mathbf{h}_{k} + {\overline{\sigma}^{2}_{k}}\Vert \mathbf{W}  \Vert^{2}_{F}/P_{\mathrm{max}}
	\end{matrix}\right] \notag \\
	& {\mathbf{F}}^{(i_{d})}_{1,k} =  {\mathbf{D}}^{(i_{d}),-1}_{1,k}\mathbf{u}\big(\mathbf{u}^{\mathrm{H}}{\mathbf{D}}^{(i_{d}), -1}_{1,k}\mathbf{u}\big)^{-1}\mathbf{u}^{\mathrm{H}}{\mathbf{D}}^{(i_{d}),-1}_{1,k}, \mathbf{u} = [1, 0]^{\mathrm{T}}  \notag
\end{align}
and $(\cdot)^{(i_{d})}$ denotes the solution obtained in the last iteration. Moreover, denoting ${\mathbf{w}}^{(i_{d})}_{k}$ and ${\mathbf{h}}^{(i_{d})}_{k}$ as the solution obtained in the previous iteration, ${\mathbf{F}}^{(i_{d})}_{1,k}$ can be further decomposed as
\begin{equation}\label{Eq:F}
	{\mathbf{F}}^{(i_{d})}_{1,k} = \left[\begin{matrix}
		\big[{\mathbf{F}}^{(i_{d})}_{1,k}\big]_{1,1} & \big[{\mathbf{F}}^{(i_{d})}_{1,k}\big]_{1,2} \\
		\big[{\mathbf{F}}^{(i_{d})}_{1,k}\big]_{2,1} & \big[{\mathbf{F}}^{(i_{d})}_{1,k}\big]_{2,2}
	\end{matrix}\right]
\end{equation}
where 
\begin{align}
	& \big[{\mathbf{F}}^{(i_{d})}_{1,k}\big]_{1,1} = 1 + \varpi\overline{\mathbf{h}}^{\mathrm{H}}_{k}{\mathbf{w}}^{(i_{d})}_{k}{\mathbf{w}}^{(i_{d}), \mathrm{H}}_{k}{\mathbf{h}}^{(i_{d}), \mathrm{H}}_{k} \notag \\
	& \big[{\mathbf{F}}^{(i_{d})}_{1,k}\big]_{1,2} = - \varpi{\mathbf{w}}^{(i_{d}), \mathrm{H}}_{k}{\mathbf{h}}^{(i_{d})}_{k}, \big[\overline{\mathbf{F}}^{(i_{d})}_{1,k}\big]_{2,1} = - \varpi{\mathbf{h}}^{{(i_{d})},\mathrm{H}}_{k}{\mathbf{w}}^{(i_{d})}_{k} \notag \\
	& \big[{\mathbf{F}}^{(i_{d})}_{1,k}\big]_{2,2} = \big(\big[{\mathbf{F}}^{(i_{d})}_{1,k}\big]_{1,1}\big)^{-1}\big[{\mathbf{F}}^{(i_{d})}_{1,k}\big]_{2,1}\big[{\mathbf{F}}^{(i_{d})}_{1,k}\big]_{1,2} \notag \\
	& \varpi = {P_{\mathrm{max}}}/\big({{\overline{\sigma}^{2}_{k}}\Vert {\mathbf{W}^{(i_{d})}}  \Vert^{2}_{F}}\big)
\end{align}
Thus, $\tr\big({\mathbf{F}}^{(i_{d})}_{1,k}\mathbf{D}_{1,k}\big)$ can be transformed into following equivalent form, namely,
\begin{align}
	& \tr\big({\mathbf{F}}^{(i_{d})}_{1,k}\mathbf{D}_{1,k}\big) =  \big[{\mathbf{F}}^{(i_{d})}_{1,k}\big]_{1,1}  + 2\re(\big[{\mathbf{F}}^{(i_{d})}_{1,k}\big]_{1,2}\mathbf{h}^{\mathrm{H}}_{k}\mathbf{w}_{k})  \notag \\
	&  + \tr(\big[{\mathbf{F}}^{(i_{d})}_{1,k}\big]_{2,2}(\mathbf{h}^{\mathrm{H}}_{k}\mathbf{w}_{k}\mathbf{w}^{\mathrm{H}}_{k}\mathbf{h}_{k} + {\overline{\sigma}^{2}_{k}}\Vert \mathbf{W}  \Vert^{2}_{F}/P_{\mathrm{max}})).
\end{align}
Via the above procedure and dropping the constant terms, the problem can be reformulated as
\begin{align}\label{Eq:Opt6}
	\mathop{\min}_{\mathbf{W}, \mathbf{T}} \ &  \sum\nolimits_{k=1}^{K} 2\re\big(\big[{\mathbf{F}}^{(i_{d})}_{1,k}\big]_{1,2}\mathbf{h}^{\mathrm{H}}_{k}(\mathbf{T})\mathbf{w}_{k}\big)  \\
	&  + \tr\big(\big[{\mathbf{F}}^{(i_{d})}_{1,k}\big]_{2,2}\mathbf{h}^{\mathrm{H}}_{k}(\mathbf{T})\mathbf{W}\mathbf{W}^{\mathrm{H}}\mathbf{h}_{k}(\mathbf{T})\big) \notag \\
	&  + \tr\big(\big[{\mathbf{F}}^{(i_{d})}_{1,k}\big]_{2,2}{\overline{\sigma}^{2}_{k}}\Vert \mathbf{W}  \Vert^{2}_{F}/P_{\mathrm{max}}\big) \notag \\
	\text{s.t.} \ & \mathrm{\overline{\overline{C}}1:} 2\re\big(\big[{\mathbf{F}}^{(i_{d})}_{1,k}\big]_{1,2}\mathbf{h}^{\mathrm{H}}_{k}(\mathbf{T})\mathbf{w}_{k}\big) \notag \\
	&  + \tr\big(\big[{\mathbf{F}}^{(i_{d})}_{1,k}\big]_{2,2}\mathbf{h}^{\mathrm{H}}_{k}(\mathbf{T})\mathbf{W}\mathbf{W}^{\mathrm{H}}\mathbf{h}_{k}(\mathbf{T})\big) \notag \\
	&  + \tr\big(\big[{\mathbf{F}}^{(i_{d})}_{1,k}\big]_{2,2}{\overline{\sigma}^{2}_{k}}\Vert \mathbf{W}  \Vert^{2}_{F}/P_{\mathrm{max}}\big) \leq \widetilde{\Gamma}_{k}, \forall k \notag 
\end{align}
where $\widetilde{\Gamma}_{k} = \log(\mathbf{u}^{\mathrm{H}}{\mathbf{D}}_{1,k}^{{(i_{d})},-1}\mathbf{u}) + \tr\big({\mathbf{F}}^{(i_{d})}_{1,k}{\mathbf{D}}^{(i_{d})}_{1,k}\big) - \Gamma_{k} - \big[{\mathbf{F}}^{(i_{d})}_{1,k}\big]_{1,1}$.
\subsubsection{\textbf{QCQP Optimization for $\mathbf{W}$}}
Note problem \eqref{Eq:Opt6} is a quadratic constraint quadratic programming (QCQP), which can be solved using the CVX optimization toolbox efficiently. 
\subsubsection{\textbf{SCA-Based Transmit Antenna Position Design}} \textbf{Proposition~\ref{Prop:SCA}} suggests that the non-convex quadratic term in problem~\eqref{Eq:Opt6} with respect to $\mathbf{T}$ can be approximated by a convex quadratic surrogate via the SCA framework. Then, we turn to solve the nonconvexity of the term $2\operatorname{Re}\big( \big[{\mathbf{F}}^{(i_{d})}_{1,k}\big]_{1,2} \, \mathbf{h}^{\mathrm{H}}_{k}(\mathbf{T}) \mathbf{w}_{k}\big)$ with respect to $\mathbf{T}$. By denoting $\mathbf{t} = \mathrm{vec}\{\mathbf{T}\}$, we have the following proposition.
\begin{proposition}\label{Prop:SCA2}
	For arbitrary vectors $\mathbf{a} \in \mathbb{C}^{N \times 1}$ and $\mathbf{b} \in \mathbb{C}^{L^{\mathrm{t}}_{k} \times 1}$, convex quadratic surrogate functions for $\re(\mathbf{b}^{\mathrm{T}}\mathbf{G}_{k}(\mathbf{T})\mathbf{a})$ is given by $\mathbf{t}^{\mathrm{H}}\mathbf{\Omega}\mathbf{t} - \mathbf{t}^{\mathrm{T}}\bm{\mu} + c$, where $\mathbf{\Omega}$ is positive semidefinite.
\end{proposition}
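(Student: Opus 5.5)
We want a convex quadratic upper bound (majorizer) of $g(\mathbf{t})=\re(\mathbf{b}^{\mathrm{T}}\mathbf{G}_{k}(\mathbf{T})\mathbf{a})$ that is tight at the current iterate $\mathbf{t}^{(i_d)}$. Such a surrogate is what the minimization in \eqref{Eq:Opt6} requires. The plan is to expand $g$ as a sum of cosines in the coordinates of $\mathbf{t}$ and then apply a second-order Taylor majorization with a curvature bound.

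\textbf{Scalar expansion.} Writing $\mathbf{b}=[b_1,\dots,b_{L^{\mathrm{t}}_k}]^{\mathrm{T}}$ and $\mathbf{a}=[a_1,\dots,a_N]^{\mathrm{T}}$, we have
\begin{equation*}
g(\mathbf{t})=\sum_{n=1}^{N}\sum_{p=1}^{L^{\mathrm{t}}_k}|b_p a_n|\cos\Big(\tfrac{2\pi}{\lambda}\big(x^{\mathrm{t}}_n\varphi^{\mathrm{t}}_{k,p}+y^{\mathrm{t}}_n\vartheta^{\mathrm{t}}_{k,p}\big)+\angle b_p+\angle a_n\Big).
\end{equation*}
This function is separable across antennas $n$. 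Each summand depends only on $\mathbf{t}_n=[x^{\mathrm{t}}_n,y^{\mathrm{t}}_n]^{\mathrm{T}}$, and it is smooth with bounded Hessian.

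\textbf{Gradient and Hessian.} For each $n$ we compute the gradient $\nabla_{\mathbf{t}_n}g$ and the $2\times2$ Hessian. The Hessian is
\begin{equation*}
\nabla^2_{\mathbf{t}_n}g=-\Big(\tfrac{2\pi}{\lambda}\Big)^2\sum_p|b_pa_n|\cos(\cdot)\,\mathbf{e}_p\mathbf{e}_p^{\mathrm{T}},\qquad \mathbf{e}_p=[\varphi^{\mathrm{t}}_{k,p},\vartheta^{\mathrm{t}}_{k,p}]^{\mathrm{T}}.
\end{equation*}
Since $|\cos|\le1$ and $\|\mathbf{e}_p\|^2\le1$, this gives the bound
\begin{equation*}
\nabla^2_{\mathbf{t}_n}g\preceq\delta_n\mathbf{I}_2,\qquad \delta_n=\Big(\tfrac{2\pi}{\lambda}\Big)^2|a_n|\sum_p|b_p|.
\end{equation*}
As an alternative, one could take $\delta_n$ as the largest eigenvalue of the matrix $(\frac{2\pi}{\lambda})^2|a_n|\sum_p|b_p|\mathbf{e}_p\mathbf{e}_p^{\mathrm{T}}$.

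\textbf{Taylor majorization.} Because $\delta_n\mathbf{I}_2-\nabla^2 g\succeq\mathbf{0}$ everywhere, Taylor's theorem with integral remainder gives, for every $\mathbf{t}$,
\begin{equation*}
g(\mathbf{t})\le g(\mathbf{t}^{(i_d)})+\nabla g(\mathbf{t}^{(i_d)})^{\mathrm{T}}(\mathbf{t}-\mathbf{t}^{(i_d)})+\tfrac12(\mathbf{t}-\mathbf{t}^{(i_d)})^{\mathrm{T}}\mathbf{\Delta}(\mathbf{t}-\mathbf{t}^{(i_d)}),
\end{equation*}
where $\mathbf{\Delta}$ is block-diagonal with blocks $\delta_n\mathbf{I}_2$, arranged consistently with the ordering of $\mathrm{vec}\{\mathbf{T}\}$. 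The bound holds with equality at $\mathbf{t}^{(i_d)}$.

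\textbf{Surrogate parameters.} Expanding the right-hand side identifies
\begin{equation*}
\mathbf{\Omega}=\tfrac12\mathbf{\Delta}\succeq\mathbf{0},\quad \bm{\mu}=\mathbf{\Delta}\mathbf{t}^{(i_d)}-\nabla g(\mathbf{t}^{(i_d)}),\quad c=g(\mathbf{t}^{(i_d)})-\nabla g(\mathbf{t}^{(i_d)})^{\mathrm{T}}\mathbf{t}^{(i_d)}+\tfrac12\mathbf{t}^{(i_d),\mathrm{T}}\mathbf{\Delta}\mathbf{t}^{(i_d)}.
\end{equation*}
Since $\mathbf{t}$ is real, $\mathbf{t}^{\mathrm{H}}\mathbf{\Omega}\mathbf{t}=\mathbf{t}^{\mathrm{T}}\mathbf{\Omega}\mathbf{t}$, so the surrogate has exactly the form stated in the proposition. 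If a lower (minorizing) surrogate is needed instead, for instance in constraint $\overline{\overline{\mathrm{C}}}1$ where the term enters with a negative coefficient, the same argument applies to $-g$, mirroring the concave/convex pair in Proposition~\ref{Prop:SCA}.

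\textbf{Main obstacle.} The delicate step is the Hessian bound and the index bookkeeping between $\mathbf{T}$ and $\mathrm{vec}\{\mathbf{T}\}$. The bound should be tight enough that the step size is not overly conservative, but it must hold uniformly over $\mathcal{C}_{\mathrm{t}}$; the crude $|\cos|\le 1$ bound guarantees this. Tightness at $\mathbf{t}^{(i_d)}$, together with the upper-bound property, then yields the monotonic descent required by SCA/MM convergence.
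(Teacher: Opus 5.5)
Your proof is correct. It reaches the surrogate by a slightly different route than the paper.

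The paper never forms a Hessian. After the same cosine expansion, it applies the one-dimensional inequality \eqref{eq:MM}, $\cos\psi\le\cos\psi_0-\sin\psi_0(\psi-\psi_0)+\frac12(\psi-\psi_0)^2$, to each summand $\cos(\overline{f}_l(\mathbf{t}_n))$ separately. Because $\overline{f}_l$ is affine in $\mathbf{t}_n$, each summand's curvature term is $\frac{2\pi^2}{\lambda^2}|a_n|\big(\mathbf{e}_l^{\mathrm{T}}(\mathbf{t}_n-\mathbf{t}_{n,0})\big)^2$, with your $\mathbf{e}_l=[\varphi^{\mathrm{t}}_{k,l},\vartheta^{\mathrm{t}}_{k,l}]^{\mathrm{T}}$. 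After weighting by $|b_l|$ and summing, the paper's $\mathbf{\Omega}=\sum_l|b_l|\mathbf{\Omega}_l$ is anisotropic. For antenna $n$ it equals $\frac12(\frac{2\pi}{\lambda})^2|a_n|\sum_l|b_l|\mathbf{e}_l\mathbf{e}_l^{\mathrm{T}}$, spread over the $x$/$y$ entries of $\mathrm{vec}\{\mathbf{T}\}$. That is exactly half of the matrix bound your own Hessian computation gives, before you relax it to $\delta_n\mathbf{I}_2$ using $\|\mathbf{e}_p\|\le 1$.

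Both surrogates are global upper bounds that agree with $g$ to first order at the previous iterate, so either one proves the proposition. The trade-offs are:
\begin{itemize}
\item The paper's termwise route needs only a scalar inequality, with no integral-remainder Taylor argument. It also gives a tighter majorizer, with curvature only along the path directions $\mathbf{e}_l$, which means less conservative SCA steps.
\item Your route gives a diagonal $\mathbf{\Omega}$. It is simpler to assemble and works verbatim for any objective with a uniformly bounded Hessian.
\item The cost of your route is a looser bound: your $\mathbf{\Omega}$ dominates the paper's in the positive semidefinite order.
\end{itemize}

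To recover the paper's surrogate exactly, use $\nabla^2_{\mathbf{t}_n}g\preceq(\frac{2\pi}{\lambda})^2|a_n|\sum_p|b_p|\mathbf{e}_p\mathbf{e}_p^{\mathrm{T}}$ itself as the curvature matrix, rather than its isotropic relaxation.
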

\begin{IEEEproof}
	See Appendix \ref{App:B}.
\end{IEEEproof}
\par
Armed with \textbf{Proposition~\ref{Prop:SCA2}}, the subproblem to $\mathbf{T}$ can be recast as
\begin{align}\label{Opt:Prop7}
	\mathop{\min}_{\mathbf{t}} \ & \sum_{k=1}^{K}\mathbf{t}^{\mathrm{T}} \mathbf{\Omega}_{k} \mathbf{t} - \mathbf{t}^{\mathrm{T}}\bm{\mu}_{k}  \\
	\text{s.t.} \ & \mathrm{{\underline{C}}1:} \mathbf{t}^{\mathrm{T}} \mathbf{\Omega}_{k} \mathbf{t} - \mathbf{t}^{\mathrm{T}}\bm{\mu}_{k} + c_{k} \leq \widehat{\Gamma}_{k} \notag \\
	\ & \mathrm{C3}, \mathrm{\overline{C}5:} \frac{({\mathbf{t}}^{(i_{d})}_{n} - {\mathbf{t}}^{(i_{d})}_{n'})({\mathbf{t}}_{n} - {\mathbf{t}}_{n'})^{\mathrm{T}}}{\Vert {\mathbf{t}}^{(i_{d})}_{n} - {\mathbf{t}}^{(i_{d})}_{n'} \Vert} \geq D \notag
\end{align}
where $\widehat{\Gamma}_{k} = \widetilde{\Gamma}_{k} - \big[{\mathbf{F}}^{(i_{d})}_{1,k}\big]_{2,2}{\overline{\sigma}^{2}_{k}}\Vert \mathbf{W}  \Vert^{2}_{F}/P_{\mathrm{max}}$, $\mathbf{\Omega}_{k}$, $\bm{\mu}_{k}$ and $c_{k}$ are obtained by applying \textbf{Proposition~\ref{Prop:SCA}} and \textbf{\ref{Prop:SCA2}} term-by-term to $\mathrm{\overline{\overline{C}}1}$, then aggregating the resulting surrogate terms, and ${\mathbf{t}}^{(i_{d})}_{n}$ is the solution obtained in the previous iteration. Consequently, problem~\eqref{Opt:Prop7} can be efficiently solved via the SCA framework outlined in \textbf{Algorithm~\ref{Alg:SCA}}, which is guaranteed to converge to a stationary point.
\vspace{-3mm}
\subsection{Complexity Analysis}
We summarize the overall AO algorithm in \textbf{Algorithm~\ref{Alg:AO}}.
\begin{algorithm}[tb]
	\SetAlgoLined
	\SetKwInOut{Input}{Input}\SetKwInOut{Output}{Output}
	
	\caption{AO Algorithm for Problem \eqref{Opt:Prop1}}
	\label{Alg:AO}
	
	\Input{CSI(AoA, AoD, and PRM)}
	\Output{$\mathbf{w}_{k}, \mathbf{v}_{k}, \mathbf{T}$, and $\mathbf{R}_{k}$}
	
	\BlankLine
	
	\Repeat{convergence}{
		Update $\mathbf{v}_{k}$ according to \eqref{Eq:MMSE}\;
		\Repeat{the objective value of problem \eqref{Opt:Prop3} converges}{
			Update $\mathbf{R}_{k}$ based on the \textbf{Algorithm \ref{Alg:SCA}}\;
		}
		Calculate ${\mathbf{D}}^{(i_{d})}_{1,k}$ and ${\mathbf{F}}^{(i_{d})}_{1,k}$\;
		Update $\mathbf{W}$ by solving problem \eqref{Eq:Opt6}\;
		\Repeat{the objective value of problem \eqref{Eq:Opt6} converges}{
			Update $\mathbf{T}$ based on the \textbf{Algorithm \ref{Alg:SCA}}\;
		}
		Set $\mathbf{W}^{\star} = \sqrt{P_{\mathrm{max}}}\mathbf{W}/\Vert \mathbf{W} \Vert_{F}$\;
	}
\end{algorithm}
\setlength{\textfloatsep}{5pt}
The computational complexity of the AO algorithm comprises three main parts. First, the complexity of solving \eqref{Eq:MMSE} for optimizing $\mathbf{v}_{k}$ is $C_{\mathbf{v}_{k}} = \mathcal{O}(KM^{3})$. Next, regarding the optimization of $\mathbf{R}_{k}$ and $\mathbf{T}$, we assume that the number of propagation paths for each channel is identical, i.e., $L^{\mathrm{t}}_{k} = L^{\mathrm{r}}_{k} = L^{\mathrm{t}}_{r,k} = L^{\mathrm{r}}_{r,k} = L$. The complexities of updating $\mathbf{R}_{k}$ and $\mathbf{T}$ using \textbf{Algorithm~\ref{Alg:SCA}} are $C_{\mathbf{R}_{k}} = \mathcal{O}(I_{1}((K+QR)M^{2}L^{2} + M^{3}))$ and $C_{\mathbf{T}} = \mathcal{O}(I_{2}(KN^{2}L^{2} + N^{3}))$, respectively, where $I_{1}$ and $I_{2}$ denote the maximum numbers of iterations, and $Q=Q_1 Q_2$. Finally, the complexity of the QCQP for optimizing $\mathbf{W}$ is $C_{\mathbf{W}} = \mathcal{O}(K^{3}N^{3})$. Consequently, the overall complexity of the AO algorithm is given by $\mathcal{O}_{\mathrm{AO}} = \mathcal{O}(\max\{C_{\mathbf{v}_{k}}, C_{\mathbf{W}}, C_{\mathbf{T}}, C_{\mathbf{R}_{k}}\})$.
\begin{remark}
	{The proposed AO framework represents the state-of-the-art performance optimization method for continuous FAS. By effectively addressing the highly non-linear and non-convex nature introduced by antenna position variables in \eqref{Eq:Channel}, as well as the minimum antenna spacing constraint, this iterative algorithm provides an efficient solution for continuous architectures. Therefore, this paper uses it as a benchmark scheme to reflect the performance limits of continuous architectures, thus providing a rigorous and objective comparative basis for the subsequent evaluation of discrete paradigms.}
\end{remark}
\section{Discrete Antenna Position Design: A Sparse Recovery Framework}\label{sec:discre}
In this section, we show that discretizing the antenna positions fundamentally changes the optimization framework. By reformulating the discrete design as a sparse recovery problem, we achieve joint rather than alternating optimization of beamforming and antenna positions, while significantly reducing complexity. Specifically, we propose a low-complexity block coodinate descent (BCD) approach augmented with RLS-SOMP, for the joint optimization of beamforming and discrete antenna position design.
\subsection{The Proposed Sparse Recovery Framework}
We first discretize the transmitter and user~$k$'s moving regions $\mathcal{C}_{\mathrm{t}}$ and $\mathcal{C}_{\mathrm{r},k}$ into $G_{\mathrm{t}}$ and $G_{\mathrm{r},k}$ candidate antenna positions with spacing $D$, thereby equivalently transforming problem~\eqref{Opt:Prop1} into the following problem:
\begin{align}\label{Opt:PropDis}
	\mathop{\max}_{\mathbf{w}_{k}, \mathbf{v}_{k}, \mathbf{B}, \mathbf{C}_{k}} \ &  \mathop{\min}_{\Delta} \ \sum\nolimits_{k=1}^{K} \widehat{R}_{k} \\
	\text{s.t.} \ \mathrm{\widehat{C}1:} &  \min_{\Delta} \ \widehat{R}_{k} \geq \Gamma_{k} \notag  \\
	\mathrm{\widehat{C}3:} & \mathbf{B} \in \{0,1\}^{G_{\mathrm{t}} \times N}, \mathbf{1}^{\mathrm{T}} \mathbf{B} = \mathbf{1}^{\mathrm{T}} \notag \\
	\mathrm{\widehat{C}4:} & \mathbf{C}_{k} \in \{0,1\}^{G_{\mathrm{r}, k} \times M}, \mathbf{1}^{\mathrm{T}} \mathbf{C}_{k} = \mathbf{1}^{\mathrm{T}} \notag 
\end{align}
where $\mathbf{B}$ and $\mathbf{C}_{k}$ denote the binary selection matrices, whose constraints $\mathbf{1}^{\mathrm{T}}\mathbf{B}=\mathbf{1}^{\mathrm{T}}$ and $\mathbf{1}^{\mathrm{T}}\mathbf{C}_{k}=\mathbf{1}^{\mathrm{T}}$ guarantee sparsity by restricting each antenna to exactly one grid point, and
\begin{align}
	& \widehat{R}_{k} = \log(1 + \widehat{\gamma}_{k}),\ {\widehat{\sigma}^{2}_{1, k}} = \sum\nolimits_{r=1}^{R}p_{\mathrm{J}, r}\vert \mathbf{v}^{\mathrm{H}}_{k}\mathbf{C}^{\mathrm{H}}_{k}\widehat{\mathbf{g}}_{r, k} \vert^{2} + \widetilde{\sigma}^{2}_{k} \notag \\
	& \widehat{\gamma}_{k} = \frac{\vert \mathbf{v}^{\mathrm{H}}_{k}\mathbf{C}^{\mathrm{H}}_{k}\widehat{\mathbf{H}}_{k}\mathbf{B}\mathbf{w}_{k} \vert^{2}}{\sum_{i \neq k}^{K}\vert \mathbf{v}^{\mathrm{H}}_{k}\mathbf{C}^{\mathrm{H}}_{k}\widehat{\mathbf{H}}_{k}\mathbf{B}\mathbf{w}_{i} \vert^{2} + {\widehat{\sigma}^{2}_{1, k}}\Vert \mathbf{W}  \Vert^{2}_{F}/P_{\mathrm{max}}}
\end{align}
with $\widehat{\mathbf{H}}_{k} \in \mathbb{C}^{G_{\mathrm{r}, k} \times G_{\mathrm{t}}}$ and $\widehat{\mathbf{g}}_{r,k}  \in \mathbb{C}^{G_{\mathrm{r}, k} \times 1}$ constructed from the discretized antenna positions.
\par
Then, we adopt a BCD framework to solve problem \eqref{Opt:PropDis} by decomposing it into three blocks, corresponding to the transmitter variables (i.e., $\mathbf{W}$ and $\mathbf{B}$), the receiver variables (i.e., $\mathbf{v}_{k}$ and $\mathbf{C}_{k}$), and the transmit power. At each iteration, we sequentially update one block while keeping the others fixed, where the variables within each block are optimized jointly. Specifically, the transmitter and receiver blocks are solved using the RLS-SOMP method, whereas the transmit power is updated via a dedicated power optimization subproblem.
\subsection{MMSE-SOMP for Robust Decoder and Discrete Receive Antenna Position Design}
This part jointly optimizes $\mathbf{v}_{k}$ and $\mathbf{C}_{k}$. First, according to \textbf{Proposition~\ref{Prop:Discre}}, the worst-case MSE of user $k$ is expressed as
\begin{align}\label{Eq:MSE}
	e_k = &
	{| 1 - \mathbf{v}_k^{\mathrm{H}} \mathbf{C}_k^{\mathrm{H}} \widehat{\mathbf{H}}_k \mathbf{B} \mathbf{w}_k |^2}
	+ {\omega \,
		\mathbf{v}_k^{\mathrm{H}} \mathbf{C}_k^{\mathrm{H}} 
		\mathbf{O}_{\mathrm{J}, k}
		\mathbf{C}_k \mathbf{v}_k}\notag\\
	&+ {\sum\nolimits_{\substack{i\neq k}}^{K}
		| \mathbf{v}_k^{\mathrm{H}} \mathbf{C}_k^{\mathrm{H}} \widehat{\mathbf{H}}_k \mathbf{B} \mathbf{w}_i |^2}+ {\omega\sigma_k^2\|\mathbf{v}_k \|^2}
\end{align}
where $\omega = {\tr\big(\mathbf{W}\mathbf{W}^{\mathrm{H}}\big)}/{P_{\max}}$ is scaling factor, and $\mathbf{O}_{\mathrm{J}, k} = \sum_{r=1}^{R} \sum_{i=1}^{Q} {p_{\mathrm{J},r}} \,
\overline{\mathbf{g}}_{r,k}^{(i)} \overline{\mathbf{g}}_{r,k}^{(i),\mathrm{H}} / {Q}$
with $\overline{\mathbf{g}}^{(i)}_{r, k}$ being the sampling channel. With \eqref{Eq:MSE}, minimizing the worst case MSE leads to the following problem:
\begin{align}\label{Opt:PropDisMMSE}
	\{\mathbf{v}^{\star}_{k}, \mathbf{C}^{\star}_{k}\} = \mathop{\arg\min}_{\mathbf{v}_{k}, \mathbf{C}_{k}} \ e_{k} \ \  \text{s.t.} \ \mathrm{\widehat{C}4}.
\end{align}
Nevertheless, problem \eqref{Opt:PropDisMMSE} is also intractable due to the high-dimensional binary selection matrix. Thus, we have the following proposition for reformulating the objective.
\begin{proposition}\label{Prop:MMSE-SOMP}
	The worst case MMSE detector problem is equivalent to the following $\ell_{2}$ norm regularized least squares problem as
	\begin{align}\label{Prob:LP_OPT_Re}
		\ \mathop{\min}_{\mathbf{v}_{k}, \mathbf{C}_{k}} &  \ \Vert {\mathbf{e}}_{k} - \mathbf{E}_{k}\mathbf{C}_{k}\mathbf{v}_{k} \Vert^{2} + \zeta_{1}\Vert \mathbf{v}_{k} \Vert^{2} \\
		\qquad \text{s.t.} & \ \mathrm{\widehat{C}4} \notag 
	\end{align}
	where ${\mathbf{e}}_{k} \in \mathbb{R}^{K + RQ}$ is the standard basis vector with a single unit entry at $k$-th position, and zeros elsewhere, $\mathbf{E}_{k}$ is given in \eqref{Eq:Theta}, and $\zeta_{1} = \omega{\sigma^{2}_{k}}$.
\end{proposition}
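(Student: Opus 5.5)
The plan is to show that the worst-case MSE $e_k$ in \eqref{Eq:MSE} is, term by term, the squared residual of a stacked linear system plus a ridge penalty. Once $\mathbf{E}_k$ is identified, the equivalence follows by direct expansion.

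\textbf{Rewrite each term as a squared inner product.} Set $\mathbf{x}_k = \mathbf{C}_k\mathbf{v}_k \in \mathbb{C}^{G_{\mathrm{r},k}}$, so that $\mathbf{v}_k^{\mathrm{H}}\mathbf{C}_k^{\mathrm{H}} = \mathbf{x}_k^{\mathrm{H}}$. The desired-signal and interference terms become $|\delta_{ik} - \mathbf{w}_i^{\mathrm{H}}\mathbf{B}^{\mathrm{H}}\widehat{\mathbf{H}}_k^{\mathrm{H}}\mathbf{x}_k|^2$ for $i=1,\dots,K$, after conjugating inside the modulus, which leaves it unchanged. For the jamming term, expand $\mathbf{O}_{\mathrm{J},k}$ using the discretization of \textbf{Proposition~\ref{Prop:Discre}}. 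This gives a sum of $RQ$ rank-one terms:
\begin{equation*}
\omega\,\mathbf{x}_k^{\mathrm{H}}\mathbf{O}_{\mathrm{J},k}\mathbf{x}_k=\sum\nolimits_{r=1}^{R}\sum\nolimits_{i=1}^{Q}\big|0-\sqrt{\omega p_{\mathrm{J},r}/Q}\,\overline{\mathbf{g}}^{(i),\mathrm{H}}_{r,k}\mathbf{x}_k\big|^2 .
\end{equation*}

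\textbf{Stack into $\mathbf{E}_k$.} Form $\mathbf{E}_k\in\mathbb{C}^{(K+RQ)\times G_{\mathrm{r},k}}$ whose first $K$ rows are $\mathbf{w}_i^{\mathrm{H}}\mathbf{B}^{\mathrm{H}}\widehat{\mathbf{H}}_k^{\mathrm{H}}$ and whose remaining $RQ$ rows are $\sqrt{\omega p_{\mathrm{J},r}/Q}\,\overline{\mathbf{g}}^{(i),\mathrm{H}}_{r,k}$. This is the form I expect \eqref{Eq:Theta} to take. The first three terms of $e_k$ then sum exactly to $\|\mathbf{e}_k-\mathbf{E}_k\mathbf{C}_k\mathbf{v}_k\|^2$, with $\mathbf{e}_k$ the $k$-th standard basis vector, since only the desired-signal row has target $1$. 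The noise term is $\omega\sigma_k^2\|\mathbf{v}_k\|^2=\zeta_1\|\mathbf{v}_k\|^2$. (Alternatively, $\|\mathbf{C}_k\mathbf{v}_k\|^2=\|\mathbf{v}_k\|^2$ because $\mathbf{C}_k$ has orthonormal columns under $\widehat{\mathrm{C}}4$ with distinct grid points. So the penalty can be written on either $\mathbf{v}_k$ or $\mathbf{x}_k$, and I will note this.) Hence $e_k$ equals the objective of \eqref{Prob:LP_OPT_Re} identically for every feasible $(\mathbf{v}_k,\mathbf{C}_k)$, and the two problems have the same minimizers under $\widehat{\mathrm{C}}4$.

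\textbf{Sparse-recovery interpretation.} Finally, I will remark that $\widehat{\mathrm{C}}4$ is equivalent to requiring $\mathbf{x}_k$ to have at most $M$ nonzero entries on distinct grid rows, i.e.\ $\|\cdot\|_{\mathrm{row},0}\le M$. This motivates the RLS-SOMP solution.

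\textbf{Main obstacle.} The only delicate point is bookkeeping. First, the conjugation and ordering in $\mathbf{E}_k$ must be consistent with the Hermitian forms in \eqref{Eq:MSE}. Second, the factor $\omega$ must be placed only on the jamming rows and not on the interference rows, so that it matches \eqref{Eq:MSE} as written, where interference carries no $\omega$. I would also verify that $\mathbf{e}_k$ is real-valued, as stated, so it matches $1-\cdot$ in the first term.
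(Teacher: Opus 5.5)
Your proof is correct and matches the paper's Appendix~C. Both rewrite each term of $e_k$ as a squared modulus of a linear form in $\mathbf{C}_k\mathbf{v}_k$, with $\alpha_r=\sqrt{p_{\mathrm{J},r}\omega/Q}$ on the jamming rows, and stack these into $\mathbf{E}_k=[\widehat{\mathbf{H}}_k\mathbf{B}\mathbf{W},\alpha_1\overline{\mathbf{g}}^{(1)}_{1,k},\dots]^{\mathrm{H}}$. The paper only adds a bookkeeping detour through an intermediate $\mathbf{\Xi}_k$ built from $\mathbf{W}_{\neg k}$ before reinserting $\mathbf{w}_k$; your side remark $\|\mathbf{C}_k\mathbf{v}_k\|=\|\mathbf{v}_k\|$ does rely on the distinct-grid-point assumption you flagged, since $\widehat{\mathrm{C}}4$ alone does not force distinct rows, but the equivalence does not need it.
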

\begin{IEEEproof}
	See Appendix \ref{App:C}.
\end{IEEEproof}
\par
In light of \textbf{Proposition~\ref{Prop:MMSE-SOMP}}, upon neglecting the term $\mathbf{C}_{k}$, problem~\eqref{Prob:LP_OPT_Re} can be recast as a regularized sparse recovery problem according to \cite{F-WMMSE}, which is stated as follows
\begin{align}\label{Prob:MMSE-SOMP}
	\ \mathop{\min}_{\widehat{\mathbf{v}}_{k}} &  \ \Vert {\mathbf{e}}_{k} - \mathbf{E}_{k}\widehat{\mathbf{v}}_{k} \Vert^{2} + \zeta_{1}\Vert \widehat{\mathbf{v}}_{k} \Vert^{2} \\
	\qquad \text{s.t.} & \ \mathrm{\underline{C}4:}\Vert \widehat{\mathbf{v}}_{k} \Vert_{\mathrm{row}, 0} = M \notag 
\end{align}
where $\widehat{\mathbf{v}}_{k} \in \mathbb{C}^{G_{\mathrm{r}} \times 1}$ denotes the sparse detector of user $k$, in which the reorganized non-zero rows correspond to the actual detector coefficients, and their indices, denoting the selected receive antenna positions, are collected into the set $\boldsymbol{\Lambda}_{\mathrm{r}, k}$.
\par
Notably, problem \eqref{Prob:MMSE-SOMP} is a regularized sparse recovery problem with sparsity level $M$, which can be efficiently solved via the RLS-SOMP outlined in \textbf{Algorithm~\ref{Alg:RLS-SSOMP}}. Then, we can obtain $\mathbf{C}_{k}$ according to set $\boldsymbol{\Lambda}_{\mathrm{r}, k}$.
\begin{algorithm}[tb]
	\SetAlgoLined
	\SetKwInOut{Input}{Input}\SetKwInOut{Output}{Output}
	
	\caption{RLS-SOMP Method for Regularized Sparse Recovery Problem}
	\label{Alg:RLS-SSOMP}
	
	\Input{Measurement signal $\mathbf{Y}$, sensing matrix $\mathbf{D}$, regularization factor $\zeta$}
	\Output{Coefficient matrix $\mathbf{X}^\star$ and support $\boldsymbol{\Lambda}$}
	
	\BlankLine
	
	\tcp*[h]{Solve: $\mathop{\arg\min}\limits_{\|\mathbf{X}\|_{\mathrm{row},0}=N}\|\mathbf{Y}-\mathbf{D}\mathbf{X}\|_{F}^{2}+\zeta\|\mathbf{X}\|_{F}^{2}$}
	
	$\mathbf{R} \gets \mathbf{Y}$,\quad $\boldsymbol{\Lambda} \gets \emptyset$,\quad $\Gamma \gets \{1,\dots,G\}$\;
	
	\For{$n = 1$ \KwTo $N$}{
		$g^\star \gets \arg\max\limits_{g\in\Gamma}\bigl\|[\mathbf{D}]_{:,g}^{H}\mathbf{R}\bigr\|_{2}^{2}$\;
		$\boldsymbol{\Lambda} \gets \boldsymbol{\Lambda}\cup\{g^\star\}$,\quad $\Gamma \gets \Gamma\setminus\{g^\star\}$\;
		$\mathbf{X}^\star \gets \bigl([\mathbf{D}]_{:,\boldsymbol{\Lambda}}^{H}[\mathbf{D}]_{:,\boldsymbol{\Lambda}}+\zeta\mathbf{I}_{n}\bigr)^{-1}[\mathbf{D}]_{:,\boldsymbol{\Lambda}}^{H}\mathbf{Y}$\;
		$\mathbf{R} \gets \mathbf{Y}-[\mathbf{D}]_{:,\boldsymbol{\Lambda}}\mathbf{X}^\star$\;
	}
\end{algorithm}
\setlength{\textfloatsep}{5pt}
\subsection{MM-SOMP for Precoder and Discrete Transmit Antenna Position Design}
In this part, the optimization of the precoder $\mathbf{W}$ and transmit antenna position selection matrix $\mathbf{B}$ are investigated. According to \textbf{Section~\ref{Sec:ThreeB}}, the worst-case achievable information rate for the $k$-th user can be rewritten as
\begin{equation}
	\min_{\Delta}\widehat{R}_{k} = \log\bigg(1+\frac{\vert \widehat{\mathbf{h}}^{\mathrm{H}}_{k}\mathbf{B}\mathbf{w}_{k} \vert^{2}}{\sum_{i \neq k}^{K}\vert \widehat{\mathbf{h}}^{\mathrm{H}}_{k}\mathbf{B}\mathbf{w}_{i} \vert^{2} + \omega{\widehat{\sigma}^{2}_{2, k}}}\bigg)
\end{equation}
where $\widehat{\mathbf{h}}_{k} = \widehat{\mathbf{H}}^{\mathrm{H}}_{k}\mathbf{C}_{k}\mathbf{v}_{k}$, and ${\widehat{\sigma}^{2}_{2, k}} = \mathbf{v}_k^{\mathrm{H}} \mathbf{C}_k^{\mathrm{H}} 
\mathbf{O}_{\mathrm{J}, k}
\mathbf{C}_k \mathbf{v}_k + \widetilde{\sigma}^{2}_{k}$. Then, based on \eqref{Eq:MM}, we can obtain a tractable lower bound for $\min_{\Delta}\widehat{R}_{k}$, which is given by
\begin{equation}
	\min_{\Delta}\widehat{R}_{k} \geq \log(\mathbf{u}^{\mathrm{H}}{\mathbf{D}}^{(i_{d}), -1}_{2,k}\mathbf{u}) - \tr\big({\mathbf{F}}^{(i_{d})}_{2,k}(\mathbf{D}_{2,k}-{\mathbf{D}}^{(i_{d})}_{2,k})\big)
\end{equation}
where 
\begin{align}
	& \mathbf{D}_{2, k} = \left[\begin{matrix}
		1 & \mathbf{w}^{\mathrm{H}}_{k}\mathbf{B}^{\mathrm{H}}\widehat{\mathbf{h}}_{k} \\
		\widehat{\mathbf{h}}^{\mathrm{H}}_{k}\mathbf{B}\mathbf{w}_{k} & 	\widehat{\mathbf{h}}^{\mathrm{H}}_{k}\mathbf{B}\mathbf{W} \mathbf{W}^{\mathrm{H}}\mathbf{B}^{\mathrm{H}}\widehat{\mathbf{h}}_{k} + \omega{\widehat{\sigma}^{2}_{2, k}}
	\end{matrix}\right] \notag \\
	& {\mathbf{F}}^{(i_{d})}_{2,k} =  {\mathbf{D}}_{2,k}^{{(i_{d})},-1}\mathbf{u}\big(\mathbf{u}^{\mathrm{H}}{\mathbf{D}}_{2,k}^{{(i_{d})},-1}\mathbf{u}\big)^{-1}\mathbf{u}^{\mathrm{H}}{\mathbf{D}}_{2,k}^{{(i_{d})},-1}  \notag
\end{align}
and ${\mathbf{D}}^{(i_{d})}_{2,k}$ denotes the solution obtained in the last iteration. Similar to \eqref{Eq:F}, ${\mathbf{F}}^{(i_{d})}_{2,k}$ can be further decomposed as
\begin{equation}
	{\mathbf{F}}^{(i_{d})}_{2,k} = \left[\begin{matrix}
		\big[{\mathbf{F}}^{(i_{d})}_{2,k}\big]_{1,1} & \big[{\mathbf{F}}^{(i_{d})}_{2,k}\big]_{1,2} \\
		\big[{\mathbf{F}}^{(i_{d})}_{2,k}\big]_{2,1} & \big[{\mathbf{F}}^{(i_{d})}_{2,k}\big]_{2,2}
	\end{matrix}\right]
\end{equation}
where 
\begin{align}
	& \big[{\mathbf{F}}^{(i_{d})}_{2,k}\big]_{1,1} = 1 + \widehat{\mathbf{h}}^{\mathrm{H}}_{k}{\mathbf{B}}^{(i_{d})}{\mathbf{w}}^{(i_{d})}_{k}{\mathbf{w}}^{{(i_{d})},\mathrm{H}}_{k}{\mathbf{B}}^{{(i_{d})},\mathrm{H}}\widehat{\mathbf{h}}_{k} / \omega{\widehat{\sigma}^{2}_{2, k}} \notag \\
	& \big[{\mathbf{F}}^{(i_{d})}_{2,k}\big]_{1,2} = - {\mathbf{w}}^{{(i_{d})},\mathrm{H}}_{k}{\mathbf{B}}^{{(i_{d})},\mathrm{H}}\widehat{\mathbf{h}}_{k}/\omega{\widehat{\sigma}^{2}_{2, k}}\notag \\
	& \big[{\mathbf{F}}^{(i_{d})}_{2,k}\big]_{2,1} = - \widehat{\mathbf{h}}^{\mathrm{H}}_{k}{\mathbf{B}}^{(i_{d})}{\mathbf{w}}^{(i_{d})}_{k}/\omega{\widehat{\sigma}^{2}_{2, k}} \notag \\
	& \big[{\mathbf{F}}^{(i_{d})}_{2,k}\big]_{2,2} = \big(\big[{\mathbf{F}}^{(i_{d})}_{2,k}\big]_{1,1}\big)^{-1}\big[{\mathbf{F}}^{(i_{d})}_{2,k}\big]_{2,1}\big[{\mathbf{F}}^{(i_{d})}_{2,k}\big]_{1,2}.
\end{align}
Hence, the corresponding subproblem can be reformulated as
\begin{align}\label{Opt:PropDisPrecoder}
	\mathop{\min}_{\mathbf{W}, \mathbf{B}} \ &  \sum\nolimits_{k=1}^{K} 2\re\big(\big[{\mathbf{F}}^{(i_{d})}_{2,k}\big]_{1,2}\widehat{\mathbf{h}}^{\mathrm{H}}_{k}\mathbf{B}\mathbf{w}_{k}\big)  + \big[{\mathbf{F}}^{(i_{d})}_{2,k}\big]_{1,1} \\
	&  + \tr\big(\big[{\mathbf{F}}^{(i_{d})}_{2,k}\big]_{2,2}\widehat{\mathbf{h}}^{\mathrm{H}}_{k}\mathbf{B}\mathbf{W}\mathbf{W}^{\mathrm{H}}\mathbf{B}^{\mathrm{H}}\widehat{\mathbf{h}}_{k} + \big[{\mathbf{F}}^{(i_{d})}_{2,k}\big]_{2,2}\omega{\widehat{\sigma}^{2}_{2, k}}\big) \notag \\
	\text{s.t.} \ & \mathrm{{\underline{C}}1:} \tr\big(\big[{\mathbf{F}}^{(i_{d})}_{2,k}\big]_{2,2}\widehat{\mathbf{h}}^{\mathrm{H}}_{k}\mathbf{B}\mathbf{W}\mathbf{W}^{\mathrm{H}}\mathbf{B}^{\mathrm{H}}\widehat{\mathbf{h}}_{k} + \big[{\mathbf{F}}^{(i_{d})}_{2,k}\big]_{2,2}\omega{\widehat{\sigma}^{2}_{2, k}}\big) \notag \\
	&  + 2\re\big(\big[{\mathbf{F}}^{(i_{d})}_{2,k}\big]_{1,2}\widehat{\mathbf{h}}^{\mathrm{H}}_{k}\mathbf{B}\mathbf{w}_{k}\big) + \big[{\mathbf{F}}^{(i_{d})}_{2,k}\big]_{1,1} \leq \widehat{\Gamma}_{k},	\mathrm{\widehat{C}3} \notag
\end{align}
where $\widehat{\Gamma}_{k} = \log(\mathbf{u}^{\mathrm{H}}{\mathbf{D}}_{2,k}^{{(i_{d})},-1}\mathbf{u}) + \tr\big({\mathbf{F}}^{(i_{d})}_{2,k}{\mathbf{D}}^{(i_{d})}_{2,k}\big) - \Gamma_{k}$.
Similar to \textbf{Proposition~\ref{Prop:MMSE-SOMP}}, the objective function of \eqref{Opt:PropDisPrecoder} can also be reformulated as an RLS problem, as demonstrated in the following proposition.
\begin{proposition}\label{Prop:MM-SOMP}
	The objective function of \eqref{Opt:PropDisPrecoder} is equivalent to the following form:
	\begin{align}
		f(\mathbf{W}, \mathbf{B}) = \Vert \mathbf{M}^{\frac{1}{2}} -  \mathbf{M}^{-\frac{1}{2}} \mathbf{N}\widehat{\mathbf{H}}\mathbf{B}\mathbf{W} \Vert^{2}_{F} + \zeta_2 \Vert \mathbf{W} \Vert^{2}_{F} 
	\end{align}
	where 
	\begin{align}
		& \widehat{\mathbf{H}} = [\widehat{\mathbf{h}}_{1}, \widehat{\mathbf{h}}_{2}, \cdots, \widehat{\mathbf{h}}_{K}]^{\mathrm{H}} ,\ \zeta_2 = \sum\nolimits_{k=1}^{K}\big[{\mathbf{F}}^{(i_{d})}_{2,k}\big]_{2,2}{\widehat{\sigma}^{2}_{2, k}}/P_{\mathrm{max}}\notag \\
		& {\mathbf{M}} =  \mathrm{diag}\big(\big[{\mathbf{F}}^{(i_{d})}_{2,1}\big]_{1,1}, \cdots, \big[{\mathbf{F}}^{(i_{d})}_{2,K}\big]_{1,1}\big) \notag \\
		& {\mathbf{N}} = - \mathrm{diag}\big(\big[{\mathbf{F}}^{(i_{d})}_{2,1}\big]_{1,2}, \cdots, \big[{\mathbf{F}}^{(i_{d})}_{2,K}\big]_{1,2}\big) 
	\end{align}
\end{proposition}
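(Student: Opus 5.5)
Let me write $a_k = [\mathbf{F}^{(i_d)}_{2,k}]_{1,1}$, $b_k = [\mathbf{F}^{(i_d)}_{2,k}]_{1,2}$ and $c_k = [\mathbf{F}^{(i_d)}_{2,k}]_{2,2}$. The plan is to expand $f(\mathbf{W},\mathbf{B})$ directly and match it, up to additive constants independent of $(\mathbf{W},\mathbf{B})$, with the objective of \eqref{Opt:PropDisPrecoder}.

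First I record some structural facts. Since $\mathbf{F}^{(i_d)}_{2,k}$ is a Hermitian positive semidefinite rank-one matrix, $a_k>0$ and $c_k\ge 0$ are real, and $b_k^{*} = [\mathbf{F}^{(i_d)}_{2,k}]_{2,1}$. Hence $\mathbf{M}$ is a positive diagonal matrix, so $\mathbf{M}^{\pm\frac12}$ are well defined and diagonal. Rank-one also gives $c_k = |b_k|^2/a_k$, which is exactly the stated formula for $[\mathbf{F}^{(i_d)}_{2,k}]_{2,2}$.

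Next I expand the Frobenius norm. Put $\mathbf{X} = \widehat{\mathbf{H}}\mathbf{B}\mathbf{W}\in\mathbb{C}^{K\times K}$, with entries $[\mathbf{X}]_{k,i} = \widehat{\mathbf{h}}^{\mathrm{H}}_k\mathbf{B}\mathbf{w}_i$. Because all the factors are diagonal,
\begin{equation*}
\Vert \mathbf{M}^{\frac12}-\mathbf{M}^{-\frac12}\mathbf{N}\mathbf{X}\Vert_F^2=\tr(\mathbf{M}) - 2\re\,\tr(\mathbf{N}\mathbf{X}) + \sum\nolimits_{k}\frac{|b_k|^2}{a_k}\sum\nolimits_{i}|[\mathbf{X}]_{k,i}|^2 .
\end{equation*}
In this expression $\tr(\mathbf{M})=\sum_k a_k$ reproduces the $[\mathbf{F}^{(i_d)}_{2,k}]_{1,1}$ terms. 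The cross term gives $-2\re\,\tr(\mathbf{N}\mathbf{X}) = \sum_k 2\re(b_k\widehat{\mathbf{h}}^{\mathrm{H}}_k\mathbf{B}\mathbf{w}_k)$, which matches the linear term, and the sign convention in $\mathbf{N}$ is what makes this work. Using $|b_k|^2/a_k=c_k$, the quadratic term becomes $\sum_k c_k\widehat{\mathbf{h}}^{\mathrm{H}}_k\mathbf{B}\mathbf{W}\mathbf{W}^{\mathrm{H}}\mathbf{B}^{\mathrm{H}}\widehat{\mathbf{h}}_k$.

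The remaining piece is $\sum_k c_k\,\omega\,\widehat{\sigma}^2_{2,k}$. Substituting $\omega=\Vert\mathbf{W}\Vert_F^2/P_{\max}$ turns it into $\zeta_2\Vert\mathbf{W}\Vert_F^2$ with $\zeta_2$ as defined. One caveat: $\mathbf{B}$ is a 0/1 column selector, so $\Vert\mathbf{B}\mathbf{W}\Vert_F=\Vert\mathbf{W}\Vert_F$, and I will note this so the regularizer can equivalently be read on the sparse matrix $\mathbf{B}\mathbf{W}$ later in the SOMP step.

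The main obstacle is $\omega$. It depends on $\mathbf{W}$, yet $\mathbf{F}^{(i_d)}_{2,k}$, and hence $a_k,b_k,c_k$, are frozen at the previous iterate. I must treat $\omega$ inside $\widehat{\sigma}$-weighted terms as $\Vert\mathbf{W}\Vert_F^2/P_{\max}$ and not as a constant, while keeping it fixed inside $\mathbf{F}^{(i_d)}_{2,k}$. I would justify this exactly as in Section~\ref{Sec:ThreeB}, where the power constraint was absorbed this way. With that convention, the two expressions differ by no term that depends on $(\mathbf{W},\mathbf{B})$, which establishes the equivalence.
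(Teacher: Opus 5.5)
Your proof is correct and is essentially the paper's argument (Appendix~\ref{App:D}) run in the opposite direction. The paper builds $\Vert \mathbf{M}^{\frac{1}{2}}-\mathbf{M}^{-\frac{1}{2}}\mathbf{N}\widehat{\mathbf{H}}\mathbf{B}\mathbf{W}\Vert_F^2+\zeta_2\Vert\mathbf{W}\Vert_F^2$ term by term from the objective, using the same two facts you use: the rank-one identity $[\mathbf{F}^{(i_d)}_{2,k}]_{2,2}=([\mathbf{F}^{(i_d)}_{2,k}]_{1,1})^{-1}[\mathbf{F}^{(i_d)}_{2,k}]_{2,1}[\mathbf{F}^{(i_d)}_{2,k}]_{1,2}$ and $\omega=\Vert\mathbf{W}\Vert_F^2/P_{\max}$. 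You instead expand the norm and match terms, and you also check $[\mathbf{F}^{(i_d)}_{2,k}]_{1,1}>0$ so that $\mathbf{M}^{\pm\frac{1}{2}}$ exist, which the paper does not. Your aside $\Vert\mathbf{B}\mathbf{W}\Vert_F=\Vert\mathbf{W}\Vert_F$ is not needed for the proposition, and it holds only when the selected grid points are distinct ($\mathbf{B}^{\mathrm{T}}\mathbf{B}=\mathbf{I}_N$), which $\widehat{\mathrm{C}}3$ alone does not enforce.
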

\begin{IEEEproof}
	See Appendix \ref{App:D}.
\end{IEEEproof}
Using \textbf{Proposition~\ref{Prop:MM-SOMP}} and neglecting the term $\mathbf{B}$ and QoS constraints, we can transform \eqref{Opt:PropDisPrecoder} into 
\begin{align}\label{Prob:MM-SOMP}
	\ \mathop{\min}_{\widehat{\mathbf{W}}} &  \ \Vert \mathbf{M}^{\frac{1}{2}} -  \mathbf{M}^{-\frac{1}{2}} \mathbf{N}\widehat{\mathbf{H}}\widehat{\mathbf{W}} \Vert^{2}_{F} + \zeta_2 \Vert \widehat{\mathbf{W}} \Vert^{2}_{F} \\
	\qquad \text{s.t.} & \ \mathrm{\underline{C}3:}\Vert \widehat{\mathbf{W}} \Vert_{\mathrm{row}, 0} = N \notag 
\end{align}
where $\widehat{\mathbf{W}}$ denotes the $N$‑row‑sparse precoder. The above problem can be initially addressed using the RLS-SOMP method, as summarized in \textbf{Algorithm~\ref{Alg:RLS-SSOMP}}. Thus, we obtain an optimized $N$‑row‑sparse matrix $\widehat{\mathbf{W}}^{\star}$, where the $N$ non-zero rows correspond to the coefficients of the selected antennas. The indices of these nonzero rows are added to the set $\boldsymbol{\Lambda}_{\mathrm{t}}$, from which the selection matrix $\mathbf{B}$ can be directly constructed.
\par
Next, with given $\widehat{\mathbf{W}}^{\star}$ and $\mathbf{B}$, the beamforming direction, denoted by $\{\widetilde{\mathbf{w}}_{k}\}^{K}_{k=1}$ is established. Then we allocate the power by solving the
following problem to satisfying QoS constraints.
\begin{align}\label{Opt:PowerOpt}
	\mathop{\min}_{\mathbf{p}} \ &  \sum\nolimits_{k=1}^{K} \log\Bigg(1 + \frac{p_k \vert \widehat{\mathbf{h}}^{\mathrm{H}}_{k}\mathbf{B}\widetilde{\mathbf{w}}_{k} \vert^{2}}{\sum_{i \neq k}^{K}p_i \vert \widehat{\mathbf{h}}^{\mathrm{H}}_{k}\mathbf{B}\widetilde{\mathbf{w}}_{i} \vert^{2} + \omega{\widehat{\sigma}^{2}_{2, k}}}\Bigg) \\
	\text{s.t.} \ & \log\Bigg(1 + \frac{p_k \vert \widehat{\mathbf{h}}^{\mathrm{H}}_{k}\mathbf{B}\widetilde{\mathbf{w}}_{k} \vert^{2}}{\sum_{i \neq k}^{K}p_i \vert \widehat{\mathbf{h}}^{\mathrm{H}}_{k}\mathbf{B}\widetilde{\mathbf{w}}_{i} \vert^{2} + \omega{\widehat{\sigma}^{2}_{2, k}}}\Bigg) \geq \Gamma_{k} \notag \\
	& \sum^{K}_{k=1} p_k \leq P_{\max} \notag
\end{align}
where $\mathbf{p} = [p_{1},p_{2},\cdots,p_{K}]^{\mathrm{T}}$ denote the power allocation vector for users.
\subsection{Complexity Analysis}
\begin{algorithm}[tb]
	\SetAlgoLined
	\SetKwInOut{Input}{Input}\SetKwInOut{Output}{Output}
	
	\caption{BCD Algorithm for Problem \eqref{Opt:PropDis}}
	\label{Alg:BCD}
	
	\Input{$\widehat{\mathbf{H}}_{k}$, $\widehat{\mathbf{g}}_{r, k}$}
	\Output{$\mathbf{W}, \mathbf{v}_{k}, \mathbf{B}$, and $\mathbf{C}_{k}$}
	
	\BlankLine
	
	\Repeat{convergence}{
		Update $\mathbf{v}_{k}$ and $\mathbf{C}_{k}$ by solving \eqref{Prob:MMSE-SOMP} via SOMP\;
		Calculate ${\mathbf{D}}^{(i_{d})}_{2,k}$ and ${\mathbf{F}}^{(i_{d})}_{2,k}$\;
		Update $\mathbf{W}^{\star}$ and $\mathbf{B}$ by solving \eqref{Prob:MM-SOMP} via SOMP\;
		Calculate the beamforming directions with $\mathbf{W}^{\star}$, i.e., $\widetilde{\mathbf{w}}_{k} = \frac{{\mathbf{w}}^{\star}_{k}}{\Vert {\mathbf{w}}^{\star}_{k} \Vert_{2}}$\;
		Calculate power allocation vector $\mathbf{p}$ by solving \eqref{Opt:PowerOpt} with $\{\widetilde{\mathbf{w}}_{k}\}^{K}_{k=1}$\;
		Calculate the final precoding vectors by ${\mathbf{w}}_{k} = \sqrt{p_k}\widetilde{\mathbf{w}}_{k}$
	}
\end{algorithm}
\setlength{\textfloatsep}{5pt}
To better understand the BCD framework, we summarize it in \textbf{Algorithm~\ref{Alg:BCD}}. The required number of operations for updating $\mathbf{v}_{k}$ and $\mathbf{C}_{k}$ can be expressed as $C_{\mathrm{r}} = N_{\mathrm{r}}G_{\mathrm{r}}(K+RQ) + \sum_{n=1}^{N_{\mathrm{r}}}(n^{2}(K+RQ) + n^{3})$. When updating $\mathbf{W}$ and $\mathbf{B}$ via the SOMP method, the operations can be represented as $C_{\mathrm{t}, 1} = N_{\mathrm{t}}G_{\mathrm{t}}K^{2} + \sum_{n=1}^{N_{\mathrm{t}}}(n^{2}K^{2} + n^{3})$. Additionally, the power allocation step requires $C_{\mathrm{t}, 2} = \mathcal{O}(K^{3.5})$ operations. Accordingly, the total computational complexity of the proposed algorithm is given by $\mathcal{O}_{\mathrm{BCD}} = \mathcal{O}(\max\{K C_{\mathrm{r}}, C_{\mathrm{t}, 1} + C_{\mathrm{t}, 2}\})$.

Evidently, given the large number of iterations $I_{1}$ and $I_{2}$ demanded by the AO algorithm for position updates, the BCD-based optimization framework has much lower complexity than that of AO and does not rely on any toolbox, which is beneficial to practical implementation.
\begin{remark}
	A direct comparison between AO for continuous design and joint optimization for discrete design might raise concerns about fairness. However, this algorithmic distinction is not a matter of choice but a fundamental consequence of the underlying spatial structures. For continuous FAS, the highly non-linear spatial response makes joint optimization mathematically intractable, thus necessitating the use of AO. In contrast, the discrete grid transforms the position design into a sparse recovery problem, unlocking joint optimization. Thus, the practical performance gain of the discrete design stems from the joint optimization capability unlocked by its grid architecture, rather than an unfair algorithmic bias.
\end{remark}
\begin{table}[H]
	\vspace{-3mm}
	\caption{Simulation Parameters}
	\label{Tab:par}
	\centering
	
	\begin{tabular}{cc}
		\toprule
		\textbf{Parameters} & \textbf{Values}  \\
		\midrule
		\rowcolor{white!92!cyan}
		Number of FAs at the BS $N$ &  $16$     \\
		Number of FAs at the user $M$ & $9$     \\
		\rowcolor{white!92!cyan}
		Number of channel paths $L$ & $8$ \\
		Carrier wavelength $\lambda$ & $0.1$~m~\cite{F-WMMSE} \\ %
		\rowcolor{white!92!cyan}
		Length of the sides of moving region $W_{\mathrm{t}}, W_{\mathrm{r}}$ & $4\lambda$ \\
		Minimum inter-FA distance $D$ & $\lambda/2$ \\
		\rowcolor{white!92!cyan}
		Channel gain at the reference distance $\rho$ & $-30$~dB \\
		Path loss exponent $\alpha$ & $2.6$ \\
		\rowcolor{white!92!cyan}
		White noise $\sigma^{2}_{k}$ & $-70$~dBm~\cite{9264659} \\ %
		Maximum power at BS $P_{\mathrm{max}}$ & $10$~dBm~\cite{11403971}  \\ %
		\rowcolor{white!92!cyan}
		Minimum rate threshold $\Gamma_{k}$ & $1$~bps/Hz~\cite{9264659} \\ %
		\bottomrule
	\end{tabular}
	\vspace{-5mm}
\end{table}
\section{Simulation Results}\label{sec:sim}
\begin{figure}[!t]
	\begin{center}		{\includegraphics[width=0.33\textwidth]{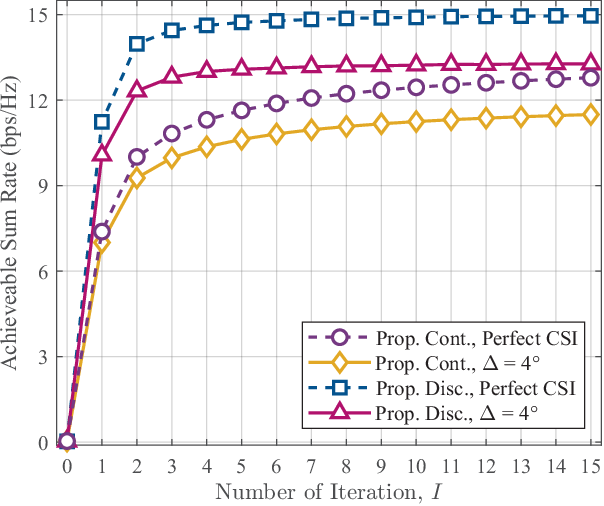}}  \vspace{-0mm}
		\captionsetup{font=footnotesize, name={Fig.}, labelsep=period}  
		\caption[t]{\raggedright Achievable sum rate versus number of iterations, given $\Delta = 4^{\circ}$, and SJNR $=-20$~dB.}
		\label{res:conve}
	\end{center}
	\vspace{-3mm}
\end{figure}
In this section, numerical results are provided to evaluate the validity and superiority of our proposed scheme and algorithm.
\subsection{Simulation Setup}
In the simulation, we consider a scenario of $K=3$ users and $R=2$ jammers, where the BS is deployed at (0, 0, 30)~m and the two jammers are located at (4, -13.5, 48.5)~m and (-0.5, -8.5, 66)~m, respectively. The three users are distributed at the drections $\{(\theta, \phi) \lvert (-40^{\circ}, 30^{\circ}), (-30^{\circ}, 50^{\circ}), (-20^{\circ}, 70^{\circ})\}$ of BS with a radius of 100 m. The main simulation parameters are detailed in \textbf{Table~\ref{Tab:par}}, unless stated otherwise. In addition, we employ a geometric channel model, assuming that the number of propagation paths for each channel is the same, i.e., $L^{\mathrm{t}}_{k} = L^{\mathrm{r}}_{k} = L^{\mathrm{t}}_{r,k}= L^{\mathrm{r}}_{r,k} = L$. The path response matrix is assumed to be diagonal with $\bm{\Sigma}\left[1,1\right] \sim \mathcal{CN}\left(0,\rho d^{-\alpha}\right)$ and $\bm{\Sigma}\left[l,l\right] \sim \mathcal{CN}\left(0,\rho d^{-\alpha}/(L-1)\right)$ for $l=2,3,\dots,L$. Here, $\rho d^{-\alpha}$ represents the expected channel gain. CSI uncertainty is set $\Delta = \theta_{U} - \theta_{L} = \phi_{U} - \phi_{{L}}$ \cite{10296481}. Moreover, SJNR is defined as $\mathrm{SJNR} = [P_{\mathrm{max}}/\sum_{r=1}^{R}p_{\mathrm{J},r}]_{[\mathrm{dB}]}$~\cite{10296481}. Here, we compare our proposed scheme and algorithms with the following benchmarks:
\begin{itemize}
	\item \textbf{FPA}: The antenna arrays at the BS and the users are fixed in position with a spacing of $\lambda/2$.
	\item \textbf{Random Position Antenna (RPA)}: The antennas at the BS and the users are FAs with random positions.
	\item \textbf{Continuous Position BCA (CP-BCA)} \cite{JSACMA}: Fractional programming and the MM algorithm were employed to optimize beamforming and FA positions in an alternating fashion.
	\item \textbf{Discrete Position FWMMSE (DP-FWMMSE)} \cite{F-WMMSE}: This method reformulates the WMMSE subproblems into a regularized sparse optimization framework, utilizing RLS-SOMP to achieve joint beamforming and discrete antenna position optimization.
\end{itemize}
\subsection{Performance Analysis}
\begin{figure}[!t]
	\begin{center}		{\includegraphics[width=0.33\textwidth]{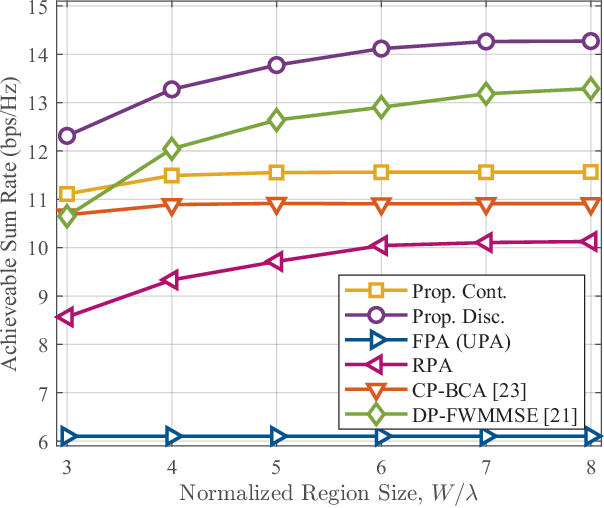}}  \vspace{-0mm}
		\captionsetup{font=footnotesize, name={Fig.}, labelsep=period}  
		\caption[t]{\raggedright Achievable sum rate versus normalized region size $W/\lambda$, given $\Delta = 4^{\circ}$, and SJNR $=-20$~dB.}
		\label{res:size}
	\end{center}
	\vspace{-3mm}
\end{figure}
\textbf{\textit{Q1: Does the discrete position design truly outperform its continuous counterpart under proposed framework?}}

\textit{\textbf{Observation 1}}: \textit{Although continuous design has a resolution advantage in theory, the proposed discrete design generally achieves better sum rate performance in our simulated scenarios by effectively bypassing the poor local optimum trap and by efficiently utilizing the extended spatial DoF. (cf. Fig. \ref{res:conve} and \ref{res:size})}

Fig. \ref{res:conve} illustrates the convergence behavior of the proposed schemes under different configurations. 
The proposed optimization framework achieves stable convergence under both continuous and discrete implementations across various CSI conditions, as indicated by the stabilized mean value. It is observed that, under the adopted optimization strategy, the discrete scheme yields better overall performance than the continuous scheme. This can be attributed to the fact that the beamforming and antenna positions are jointly optimized in the discrete scheme, whereas the variables in the continuous scheme are optimized alternately, which may lead to convergence to suboptimal local stationary points.

Fig.~\ref{res:size} demonstrates the impact of the physical size of the movable region on the sum achievable rate. It can be observed that the achievable sum rate gradually increases and then converges as the region size expands. This is because that enlarging the movable region offers more DoFs for position optimization of the FAs, thereby improving anti-jamming performance. Specifically, even with a normalized region size of 3, the proposed discrete design achieves approximately 100\% performance improvement compared to the traditional FPA scheme, without requiring additional RF chains. Since RF chains are typically expensive, the proposed design substantially reduces the hardware burden, offering a highly efficient solution for anti-jamming communication. Nevertheless, it should be noted that the performance improvement resulting from the expanded region size is limited, and the increase in the sum-rate slows down as $W/\lambda$ increases. This indicates that in practical system design, a trade-off should be made between area size and performance gains to achieve the optimal balance between cost and efficiency.
\subsection{Beampattern Analysis}
\textbf{\textit{Q2: Why does the proposed discrete design achieve superior anti-jamming performance?}}
\begin{figure*}[!t]
	\centering
	\subfloat[]{  
		\includegraphics[width=0.3\textwidth]{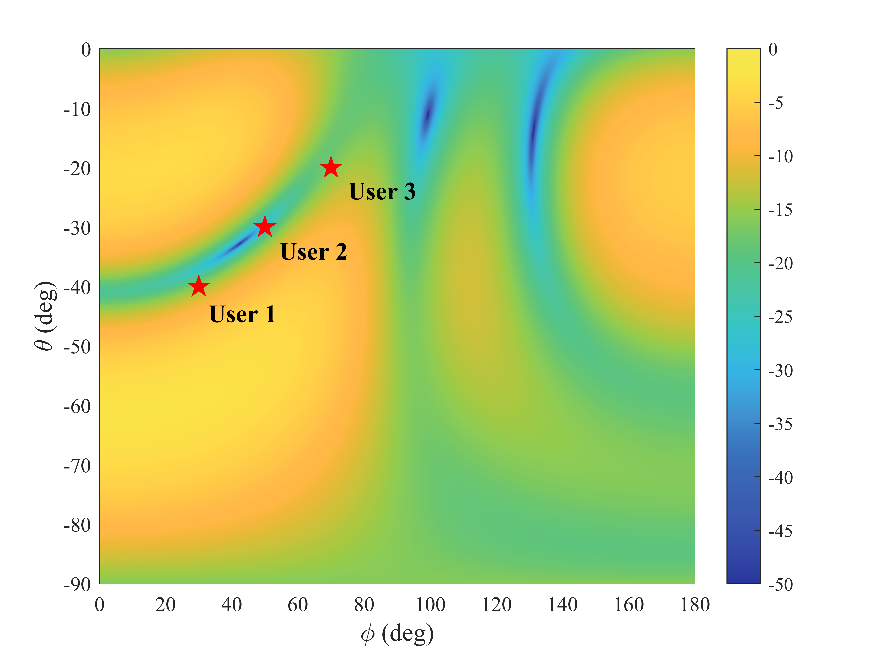}
		\label{fig:sub_a}
	}
	\subfloat[]{  
		\includegraphics[width=0.3\textwidth]{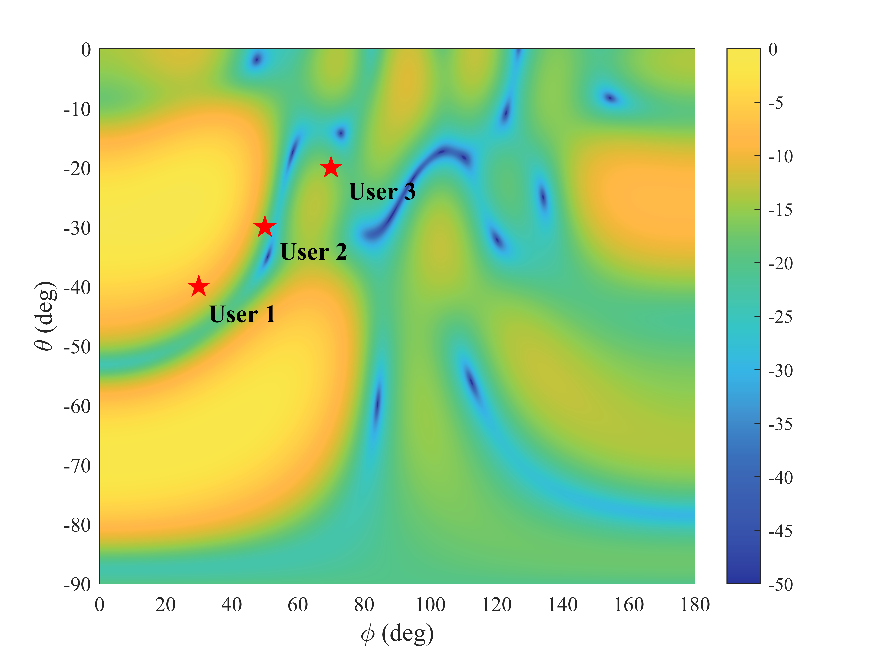}
		\label{fig:sub_b}
	}
	\subfloat[]{  
		\includegraphics[width=0.3\textwidth]{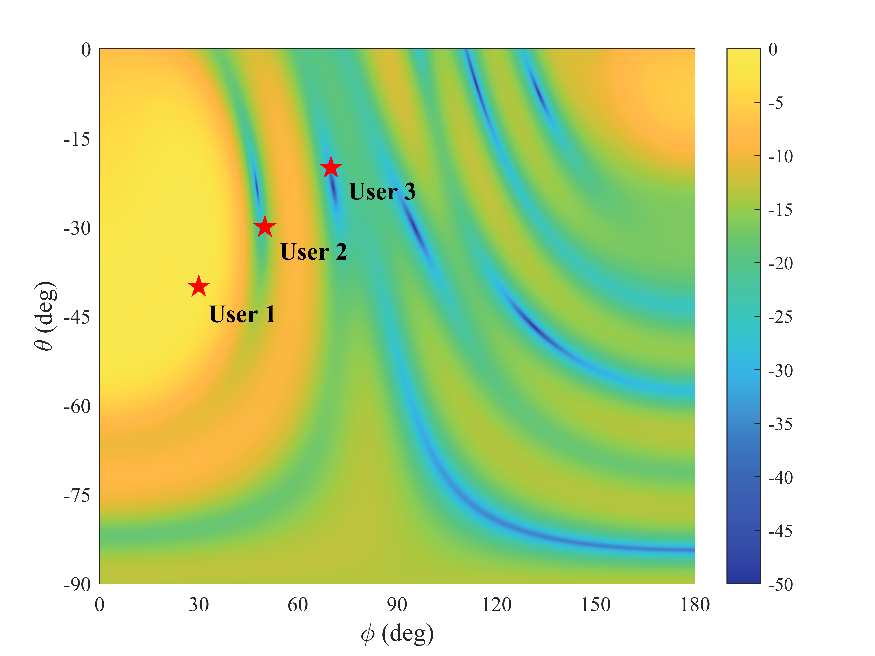}
		\label{fig:sub_c}
	}
	\vspace{-4mm}
	\\
	\subfloat[]{  
		\includegraphics[width=0.3\textwidth]{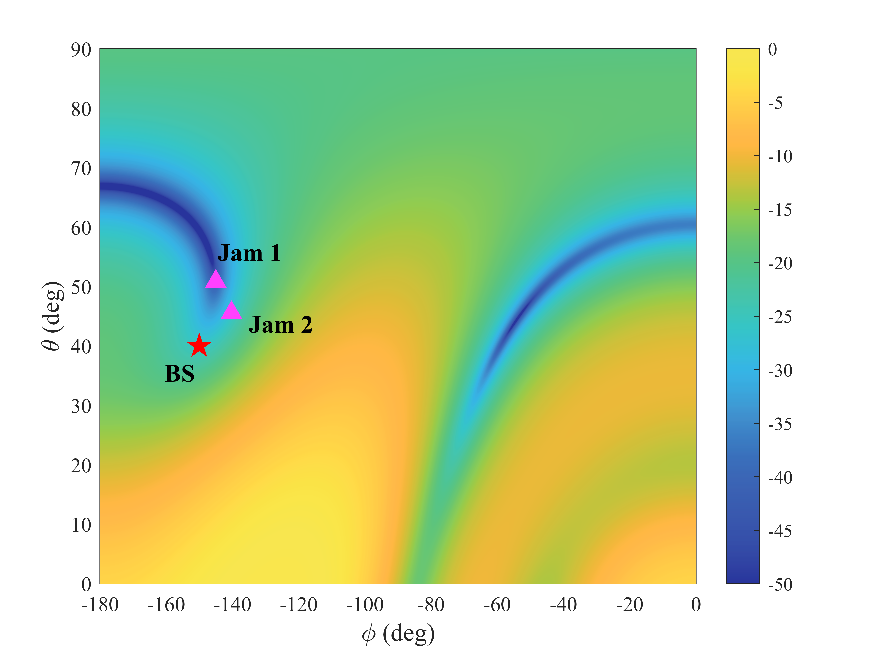}
		\label{fig:sub_d}
	}
	\subfloat[]{  
		\includegraphics[width=0.3\textwidth]{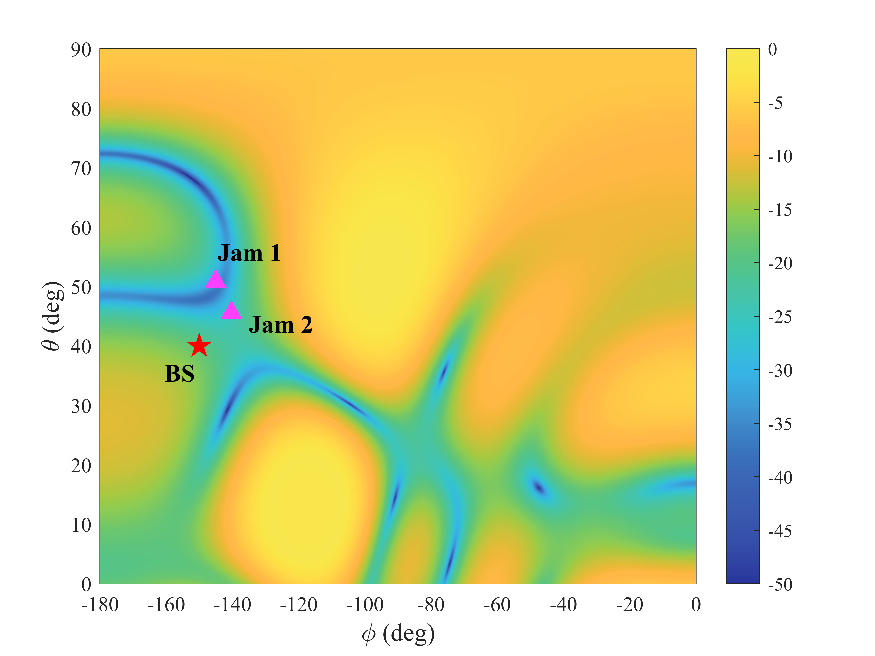}
		\label{fig:sub_e}
	}
	\subfloat[]{  
		\includegraphics[width=0.3\textwidth]{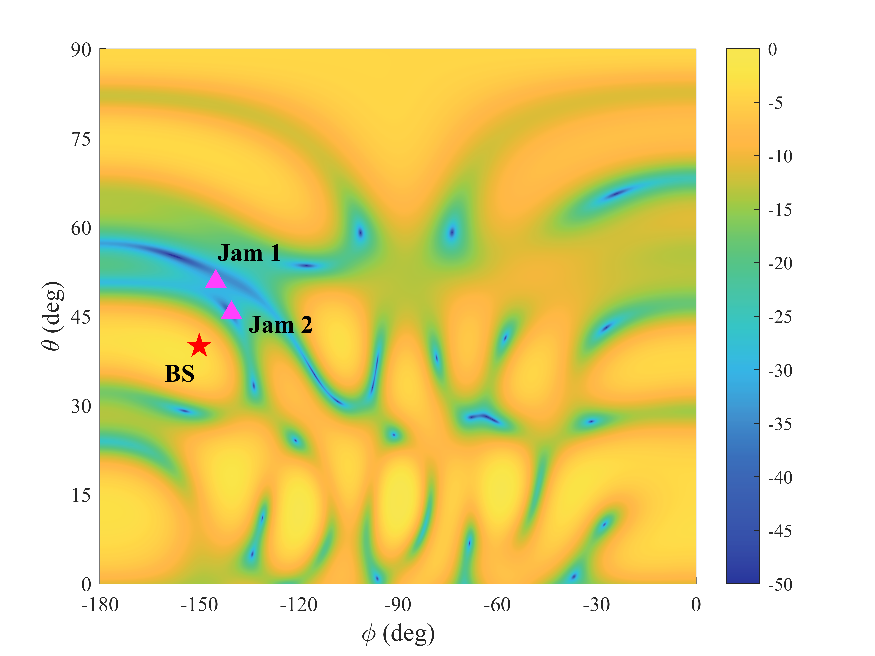}
		\label{fig:sub_f}
	}
	\captionsetup{font=footnotesize, name={Fig.}, labelsep=period}  
	\caption[t]{\raggedright Beampattern of different architectures in different methods (colorbar is unified for each subfigures, unit: dB): (a) The transmit beampattern $\mathbf{w}_{1}$ pointing to ($-40^{\circ},30^{\circ}$) in FPA; (b) The transmit beampattern $\mathbf{w}_{1}$ pointing to ($-40^{\circ},30^{\circ}$) in continuous FA optimized via AO; (c) The transmit beampattern $\mathbf{w}_{1}$ pointing to ($-40^{\circ},30^{\circ}$) in discrete FA optimized via BCD; (d) The receive beampattern of $\mathbf{v}_{1}$ pointing to ($40^{\circ},-150^{\circ}$) in FPA; (e) The receive beampattern of $\mathbf{v}_{1}$ pointing to ($40^{\circ},-150^{\circ}$) in continuous FA optimized via AO; (f) The receive beampattern of $\mathbf{v}_{1}$ pointing to ($40^{\circ},-150^{\circ}$) in discrete FA optimized via BCD, given $\Delta = 4^{\circ}$, and SJNR $=-20$~dB.}
	\label{res:beam}
	\vspace{-4mm}
\end{figure*}

\textit{\textbf{Observation 2}}: \textit{Compared with FPA and continuous FA, the proposed discrete FA design forms more focused beams toward desired directions while effectively suppressing multi-user interference and jamming signals through more precise spatial responses. (cf. Fig.~\ref{res:beam})}

To intuitively interpret this underlying mechanism, Fig. \ref{res:beam} illustrates the normalized transmit and receive beampattern of different architectures and methods. For the transmit side, Figs.~\ref{res:beam}(a)–(c) show that the proposed discrete design generates a more concentrated main lobe pointing toward the intended user direction, while producing lower sidelobe levels compared with FPA and the proposed continuous design. Using the transmit beampattern of $\mathbf{w}_{1}$ as an example, the proposed discrete design attains an SINR of approximately $-1$~dB at the direction of user~1, whereas deep nulls with depths of $-20$~dB and $-32$~dB are generated toward the remaining users. 
This indicates that the proposed discrete design is able to more effectively exploit spatial DoFs through joint optimization.

On the receive side, as shown in Figs.~\ref{res:beam}(d)–(f), proposed discrete design exhibits superior jamming suppression capability. It can be observed that the proposed discrete design can accurately generate the nulls towards the jammers’ regions, and simultaneously align the mainlobes with the desired target, even under the angular uncertainty. In contrast, the FPA and continuous design suffer from distorted beam patterns and higher sidelobe leakage, which may degrade interference mitigation performance. In particular, the received SINR of the proposed discrete design at the direction of the BS is approximately $-4$~dB, whereas those of the FPA and continuous design are about $-25$~dB and $-19$~dB, respectively. The above observations confirm that our proposed discrete design can fully exploit DoFs in the spatial domains to realize anti-jamming communications.
\subsection{Robustness Analysis}
\textbf{\textit{Q3: Is this superiority maintained under extreme jamming and channel uncertainty?}}

\textit{\textbf{Observation 3}}: \textit{The proposed discrete design maintains a significant performance advantage under severe jamming and imperfect CSI. However, bounded by finite position grids, its degradation rate against increasing channel uncertainty is slightly steeper than that of the highly robust proposed continuous design. (cf. Fig. \ref{res:Error} and \ref{res:SJNR})}

Fig. \ref{res:Error} shows the achievable sum rate versus the jamming channel uncertainty $\Delta$. It can be seen that the sum achievable rate decreases with $\Delta$, and the proposed discrete design outperforms the other state-of-the-art benchmarks significantly across all channel uncertainties. It is worth noting that the discrete design is more sensitive to $\Delta$, while the proposed continuous design maintains strong robustness as $\Delta$ increases. This is mainly because the discrete design is limited by the finite DoFs in antenna position optimization. When jammer’s angular CSI error increases, it becomes difficult to select the optimal antenna layout, thus limiting its performance. Nevertheless, the proposed discrete design still exhibits stronger robustness than DP-FWMMSE, and achieves performance improvements of approximately 13.8\% and 104\% compared to the continuous design and FPA, respectively.

The relationship between the achievable sum rate and SJNR is plotted in Fig. \ref{res:SJNR}. It can be seen that the lower SJNR is, the lower achievable sum rate will be. Notably, even at an SJNR of $-30$~dB, the proposed discrete design still achieves performance gains of approximately 14.5\% and 34.3\% compared with the state-of-the-art discrete design and the proposed continuous design, respectively. This robustness can be
attributed to the MM algorithm’s ability to construct a tighter and more stable surrogate function than WMMSE, while the joint optimization of beamforming and antenna positions further helps avoid convergence to poor local optima. The above results imply that our proposed optimization framework is more robust under powerful jamming attacks.
\begin{figure}[!t]
	\begin{center}		{\includegraphics[width=0.33\textwidth]{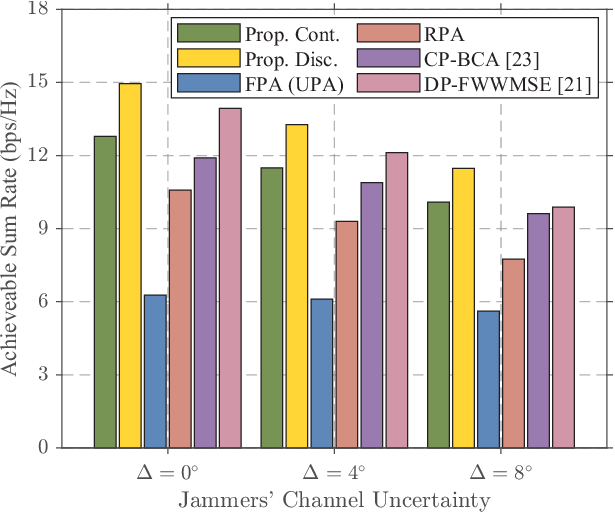}}  \vspace{-0mm}
		\captionsetup{font=footnotesize, name={Fig.}, labelsep=period}  
		\caption[t]{\raggedright Achievable sum rate versus jammers' channel uncertainty $\Delta$, given SJNR $=-20$~dB.}
		\label{res:Error}
	\end{center}
	\vspace{-3mm}
\end{figure}
\section{Conclusion}\label{sec:conl}
{This paper developed two distinct optimization frameworks for continuous and discrete antenna position designs in robust FAS-assisted anti-jamming communications. Sepcifically, the continuous design relies on AO to iteratively update antenna position and beamforming, while the discrete design achieves joint optimization of both by reconstructing the original problem into a sparse recovery task. The resluts reveal two important phenomena and one key insight. First, by trading spatial resolution for mathematical tractability, the proposed discrete joint optimization paradigm significantly overcomes the performance bottleneck of AO-based continuous schemes in complex anti-jamming scenarios. Second, even in extreme scenarios with high jamming power and severe channel uncertainty, the discrete design maintains a significant performance advantage, although its robustness is slightly inferior to the continuous architecture due to the inherent resolution of the discrete grid. These phenomena reveal a key insight: the discrete framework effectively circumvents poor local optima through its greedy design. Consequently, by unlocking joint optimization, it fully exploits the flexible beamforming potential of FAS to achieve higher-precision spatial alignment. Overall, the proposed discrete design achieves significant sum-rate gains over the existing state-of-the-art benchmark, demonstrating the framework's superior performance in FAS-assisted anti-jamming communication networks.}
\begin{figure}[!t]
	\begin{center}		{\includegraphics[width=0.33\textwidth]{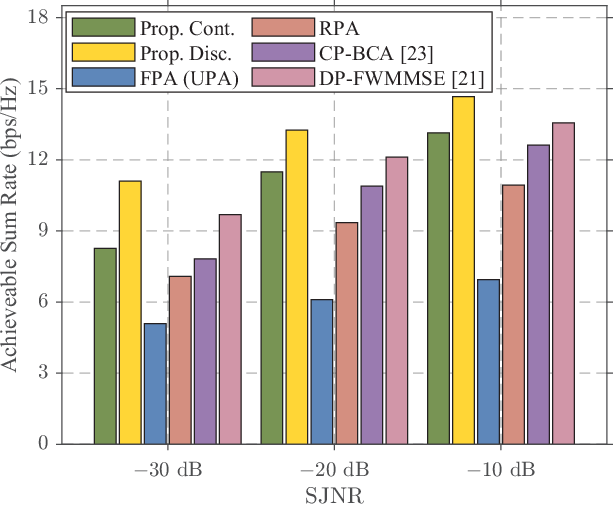}}  \vspace{-0mm}
		\captionsetup{font=footnotesize, name={Fig.}, labelsep=period}  
		\caption[t]{\raggedright Achievable sum rate versus SJNR, given $\Delta = 4^{\circ}$.}
		\label{res:SJNR}
	\end{center}
	\vspace{-3mm}
\end{figure}
\appendices
\section{Proof of Proposition \ref{Prop:SCA}}\label{App:A}
\vspace{-6mm}
\begin{equation}
	\begin{aligned}
		& \vert \mathbf{b}^{\mathrm{H}}\mathbf{F}_{k}(\mathbf{R}_{k})\mathbf{a} \vert^{2} \\
		&  = \sum_{l_{1}, l_{2}}^{L^{\mathrm{r}}_{k}} \sum_{m_{1}, m_{2}}^{M}b^{*}_{l_{1}} b_{l_{2}} a_{m_{1}} a^{*}_{m_{2}} e^{j\frac{2\pi}{\lambda}(\rho^{\mathrm{r}}_{k, l_{1}}(\mathbf{r}_{k,n_{1}}) -\rho^{\mathrm{r}}_{k, l_{2}}(\mathbf{r}_{k,n_{2}}) )} \\
		& = \sum_{l_{1}, l_{2}}^{L^{\mathrm{r}}_{k}} \sum_{m_{1}, m_{2}}^{M}\!\!\mathrm{Re}\Big\{b^{*}_{l_{1}} b_{l_{2}} a_{m_{1}} a^{*}_{m_{2}} e^{j\frac{2\pi}{\lambda}(\rho^{\mathrm{r}}_{k, l_{1}}(\mathbf{r}_{k,m_{1}}) -\rho^{\mathrm{r}}_{k, l_{2}}(\mathbf{r}_{k,m_{2}}) )}\Big\} \\
		& = \sum_{l_{1}, l_{2}}^{L^{\mathrm{r}}_{k}} \sum_{m_{1}, m_{2}}^{M} \vert b_{l_{1}} \vert \vert b_{l_{2}} \vert \vert a_{m_{1}} \vert \vert a_{m_{2}} \vert \cos(f_{l_{1}, l_{2}}(\mathbf{r}_{k, m_{1}}, \mathbf{r}_{k, m_{2}})) \\
		& = \sum_{l_{1}, l_{2}}^{L^{\mathrm{r}}_{k}} \vert b_{l_{1}} \vert \vert b_{l_{2}} \vert \sum_{m_{1}, m_{2}}^{M} \vert a_{m_{1}} \vert \vert a_{m_{2}} \vert \cos(f_{l_{1}, l_{2}}(\mathbf{r}_{k,m_{1}}, \mathbf{r}_{k,m_{2}}))
	\end{aligned}
\end{equation}
where $f_{l_{1}, l_{2}}(\mathbf{r}_{k,m_{1}}, \mathbf{r}_{k,m_{2}}) = \frac{2\pi}{\lambda}(\rho_{l_{1}}(\mathbf{r}_{k,m_{1}}) - \rho_{l_{2}}(\mathbf{r}_{k,m_{2}})) + \angle (b^{*}_{l_{1}} b_{l_{2}} a_{m_{1}} a^{*}_{m_{2}})$. Based on Taylor series expansion, we can construct a convex quadratic surrogate function for any cosine function as
\begin{equation}\label{eq:MM}
	\cos(\psi) \leq \cos(\psi_{0}) - \sin(\psi_{0})(\psi - \psi_{0}) + \frac{1}{2}(\psi - \psi_{0})^{2}.
\end{equation}
Denoting $\mathbf{r}_{k, 0}$ as the position vector obtained in the previous iteration and using \eqref{eq:MM}, we can get an upper bound of $\vert \mathbf{b}^{\mathrm{H}}\mathbf{F}_{k}(\mathbf{R}_{k})\mathbf{a} \vert^{2}$, which is given by
\begin{equation}\label{Eq:MMPSD}
	\begin{aligned}
		& \vert \mathbf{b}^{\mathrm{H}}\mathbf{F}_{k}(\mathbf{R}_{k})\mathbf{a} \vert^{2} \\
		& \leq \sum\nolimits_{l_{1}, l_{2}}^{L^{\mathrm{r}}_{k}} \vert b_{l_{1}} \vert \vert b_{l_{2}} \vert (\mathbf{r}_{k}^{\mathrm{T}} \mathbf{U}_{l_{1}, l_{2}} \mathbf{r}_{k} - \mathbf{r}_{k}^{\mathrm{T}}\bm{\nu}_{l_{1}, l_{2}} + d_{l_{1}, l_{2}}) \\
		&  = \mathbf{r}_{k}^{\mathrm{T}} \mathbf{U}_{1} \mathbf{r}_{k} - \mathbf{r}_{k}^{\mathrm{T}}\bm{\nu}_{1} + d_{1}.
	\end{aligned}
\end{equation}
By defining $\bm{\xi}_{l_{1}, l_{2}, m_{1}, m_{2}} = \frac{2\pi}{\lambda} (\mathbf{\Pi}_{m_{1}}^{\mathrm{T}} \mathbf{p}_{k,l_{1}} - \mathbf{\Pi}_{m_{2}}^{\mathrm{T}} \mathbf{p}_{k,l_{2}})$,  $\phi_{l_{1}, l_{2}, m_{1}, m_{2}} = \angle (b^{*}_{l_{1}} b_{l_{2}} a_{m_{1}} a^{*}_{m_{2}})$, and $\mathbf{p}_{k,l} = [\varphi^{\mathrm{r}}_{k, l}, \vartheta^{\mathrm{r}}_{k, l}]^{\mathrm{T}}$ where $\mathbf{r}_{k,m} = \mathbf{\Pi}_{m} \mathbf{r}_{k}$, we can obtain that $\psi_{l_{1}, l_{2}, m_{1}, m_{2}} = \bm{\xi}_{l_{1}, l_{2}, m_{1}, m_{2}}^{\mathrm{T}} \mathbf{r}_{k} + \phi_{l_{1}, l_{2}, m_{1}, m_{2}}$. Substituting $\psi_{l_{1}, l_{2}, m_{1}, m_{2}}$ into \eqref{eq:MM} and matching the coefficients in \eqref{Eq:MMPSD}, the intermediate terms are  shown at the bottom of the next page, where $\psi_{0,l_{1}, l_{2}, m_{1}, m_{2}} = \mathbf{g}_{l_{1}, l_{2}, m_{1}, m_{2}}^{\mathrm{T}} \mathbf{r}_{k,0} + \phi_{l_{1}, l_{2}, m_{1}, m_{2}}$. The aggregate coefficients are obtained as
\begin{align}
	\mathbf{U}_{1} & = \sum\nolimits_{l_{1}, l_{2}} |b_{l_{1}} b_{l_{2}}| \mathbf{U}_{l_{1}, l_{2}} \\
	\bm{\nu}_{1} & = \sum\nolimits_{l_{1}, l_{2}} |b_{l_{1}} b_{l_{2}}| \bm{\nu}_{l_{1}, l_{2}}  \\
	d_{1} & = \sum\nolimits_{l_{1}, l_{2}} |b_{l_{1}} b_{l_{2}}| d_{l_{1}, l_{2}}.
\end{align}
Analogous to \eqref{Eq:MMPSD}, a concave quadratic surrogate function of the form 
$\mathbf{r}_{k}^{\mathrm{T}} \mathbf{U}_{2} \mathbf{r}_{k} - \mathbf{r}_{k}^{\mathrm{T}} \bm{\nu}_{2} + d_{2}$ 
can be constructed following the same steps, e.g., by applying the second-order concave upper bound for the cosine function $
\cos(\psi) \geq \cos(\psi_{0}) - \sin(\psi_{0})(\psi - \psi_{0}) - \frac{1}{2}(\psi - \psi_{0})^{2}$. For brevity, the full derivation is omitted here.
\begin{figure*}[hb]
	\centering
	\vspace{-4mm}
	\hrulefill
	\vspace{-2mm}
	\begin{align}
		\mathbf{U}_{l_{1}, l_{2}} &= \sum_{m_{1}, m_{2}}^{M} \frac{1}{2} |a_{m_{1}} a_{m_{2}}| \bm{\xi}_{l_{1}, l_{2}, m_{1}, m_{2}} \bm{\xi}_{l_{1}, l_{2}, m_{1}, m_{2}}^{\mathrm{T}} \\
		\bm{\nu}_{l_{1}, l_{2}} &= \sum_{m_{1}, m_{2}}^{M} |a_{m_{1}} a_{m_{2}}| (\psi_{0, l_{1}, l_{2}, m_{1}, m_{2}} + \sin\psi_{0, l_{1}, l_{2}, m_{1}, m_{2}} - \phi_{l_{1}, l_{2}, m_{1}, m_{2}}) \bm{\xi}_{l_{1}, l_{2}, m_{1}, m_{2}} \\
		d_{l_{1}, l_{2}} &=\! \sum_{m_{1}, m_{2}}^{M} |a_{m_{1}} a_{m_{2}}| \bigg( \frac{\psi_{0, l_{1}, l_{2}, m_{1}, m_{2}}^2 + \phi_{l_{1}, l_{2}, m_{1}, m_{2}}^2}{2}
		+ \sin\psi_{0, l_{1}, l_{2}, m_{1}, m_{2}}(\psi_{0, l_{1}, l_{2}, m_{1}, m_{2}} - \phi_{l_{1}, l_{2}, m_{1}, m_{2}})  + \cos\psi_{0, l_{1}, l_{2}, m_{1}, m_{2}} \bigg)
	\end{align}
\end{figure*}
\section{Proof of Proposition \ref{Prop:SCA2}}\label{App:B}
\begin{align}\label{Eq:RealSCA}
	& \mathrm{Re}( \mathbf{b}^{\mathrm{H}}\mathbf{G}_{k}(\mathbf{T})\mathbf{a}) = \mathrm{Re}\bigg( \sum\nolimits_{l=1}^{L^{\mathrm{t}}_{k}}\sum\nolimits_{n=1}^{N}b^{*}_{l}a_{n}e^{j\frac{2\pi}{\lambda}\rho^{\mathrm{t}}_{k,l}(\mathbf{t}_{n})} \bigg) \notag \\
	& = \sum\nolimits_{l=1}^{L^{\mathrm{t}}_{k}}\sum\nolimits_{n=1}^{N}\vert b_{l} \vert \vert a_{n} \vert\cos\big(\frac{2\pi}{\lambda}\rho^{\mathrm{t}}_{k,l}(\mathbf{t}_{n}) + \angle b^{*}_{l}a_{n} \big)\notag \\
	& = \sum\nolimits_{l=1}^{L^{\mathrm{t}}_{k}}\vert b_{l} \vert \sum\nolimits_{n=1}^{N}\vert a_{n} \vert\cos\big(\overline{f}_{l}(\mathbf{t}_{n})\big)
\end{align}
where $\overline{f}_{l}(\mathbf{t}_{n}) = \frac{2\pi}{\lambda}\rho_{l}(\mathbf{t}_{n}) + \angle (b^{*}_{l} a_{n})$. Based on the Taylor series expansion in \eqref{eq:MM}, a upper bound of \eqref{Eq:RealSCA} is written as
\begin{equation}
	\mathrm{Re}( \mathbf{b}^{\mathrm{H}}\mathbf{G}_{k}(\mathbf{T})\mathbf{a}) \leq \sum\nolimits_{l=1}^{L^{\mathrm{t}}_{k}} \vert b_{l} \vert(\mathbf{t}^{\mathrm{T}} {\mathbf{\Omega}}_{l} \mathbf{t} - \mathbf{t}^{\mathrm{T}}{\bm{\mu}}_{l} + c_{l})
\end{equation}
where 
\begin{align}
	& {\mathbf{\Omega}}_{l}\! =\! \frac{2\pi^2}{\lambda^2}\left[\begin{matrix}
		\varphi^{\mathrm{t}, 2}_{l}\mathrm{diag}(\overline{\mathbf{a}})\! \!\! &\!\! \! \varphi^{\mathrm{t}}_{l}\vartheta^{\mathrm{t}}_{l}\mathrm{diag}(\overline{\mathbf{a}}) \\
		\varphi^{\mathrm{t}}_{l}\vartheta^{\mathrm{t}}_{l}\mathrm{diag}(\overline{\mathbf{a}}) \! \!\! &\! \!\! \vartheta^{\mathrm{t}, 2}_{l}\mathrm{diag}(\overline{\mathbf{a}})
	\end{matrix}\right], {\bm{\mu}}_{l} \!=\! \frac{2\pi}{\lambda}[\varphi_{l} \overline{\mathbf{e}}^{\mathrm{T}}_{l}, \vartheta_{l} \overline{\mathbf{e}}^{\mathrm{T}}_{l}]^{\mathrm{T}} \notag \\
	& \overline{\mathbf{e}}^{\mathrm{T}}_{l} = [\vert a_{1} \vert \overline{e}_{1, l}, \cdots, \vert a_{N} \vert \overline{e}_{N, l}], \overline{e}_{n,l} = \sin(\overline{f}_{l}(\mathbf{t}_{n, 0})) + \rho^{\mathrm{t}}_{k,l}(\mathbf{t}_{n, 0}) \notag \\
	& \overline{d}_{l} \!=\! \sum_{n=1}^{N}\! \vert a_{n} \vert \cos(\overline{f}_{l}(\mathbf{t}_{n, 0})) \!+\! \sin(\overline{f}_{l}(\mathbf{t}_{n, 0}))\rho_{l}(\mathbf{t}_{n, 0}) \!+\! \frac{1}{2}\rho^{\mathrm{t},2}_{k, l}(\mathbf{t}_{n, 0}) \notag
\end{align}
and $\mathbf{t}_{n, 0}$ is denoted as the position vectors obtained in the previous iteration. Then, we can get 
\begin{equation}
	\mathrm{Re}( \mathbf{b}^{\mathrm{H}}\mathbf{G}_{k}(\mathbf{T})\mathbf{a}) \leq \mathbf{t}^{\mathrm{T}} {\mathbf{\Omega}} \mathbf{t} - \mathbf{t}^{\mathrm{T}}{\bm{\mu}} + c
\end{equation}
where ${\mathbf{\Omega}} = \sum_{l=1}^{L^{\mathrm{t}}_{k}}\vert b_{l} \vert {\mathbf{\Omega}}_{l}$, ${\bm{\mu}} = \sum_{l=1}^{L^{\mathrm{t}}_{k}}\vert b_{l} \vert \bm{\mu}_{l}$, and $c = \sum_{l=1}^{L^{\mathrm{t}}_{k}}\vert b_{l} \vert c_{l}$.
\section{Proof of Proposition \ref{Prop:MMSE-SOMP}}\label{App:C}
First, we rewrite the worst case MSE as
\begin{align}
	e_{k} & = \Vert 1 - \mathbf{v}_k^{\mathrm{H}} \mathbf{C}_k^{\mathrm{H}} \widehat{\mathbf{H}}_k \mathbf{B} \mathbf{w}_k \Vert^{2} + \omega \sigma_k^2 \Vert \mathbf{v}_k \Vert^2 \notag \\
	&  + 0 + \mathbf{v}_k^{\mathrm{H}} \mathbf{C}_k^{\mathrm{H}} \widehat{\mathbf{H}}_k \mathbf{B} \mathbf{W}_{\neg k}\mathbf{W}^{\mathrm{H}}_{\neg k}\mathbf{B}^{\mathrm{H}}\widehat{\mathbf{H}}^{\mathrm{H}}_k \mathbf{C}_k\mathbf{v}_k \notag \\
	& + 0 + \sum_{r=1}^{R}\sum_{i=1}^{Q}\alpha^{2}_{r} \mathbf{v}_k^{\mathrm{H}} \mathbf{C}_k^{\mathrm{H}} \overline{\mathbf{g}}_{r,k}^{(i)} \overline{\mathbf{g}}_{r,k}^{(i),\mathrm{H}}\mathbf{C}_k\mathbf{v}_k
\end{align}
where $\mathbf{W}_{\neg k}$ represents the precoder for all user except for user $k$, and $\alpha_{r} \triangleq \sqrt{{p_{\mathrm{J}, r} \omega}/{Q }}$. Then, it can be further expressed as
\begin{align}
	e_{k} & = \Vert 1 - \mathbf{v}_k^{\mathrm{H}} \mathbf{C}_k^{\mathrm{H}} \widehat{\mathbf{H}}_k \mathbf{B} \mathbf{w}_k \Vert^{2} + \Vert \mathbf{0} - \mathbf{v}_k^{\mathrm{H}} \mathbf{C}_k^{\mathrm{H}} \widehat{\mathbf{H}}_k \mathbf{B} \mathbf{W}_{\neg k} \Vert^{2} \notag \\
	& \quad + \sum_{r=1}^{R}\sum_{i=1}^{Q}\Vert0 - \alpha_{r}\mathbf{v}_k^{\mathrm{H}} \mathbf{C}_k^{\mathrm{H}} \overline{\mathbf{g}}_{r,k}^{(i)} \Vert^{2} + \zeta_{1} \Vert \mathbf{v}_k \Vert^2 \notag \\
	& = \left\| \begin{bmatrix} 1 \\ \mathbf{0} \end{bmatrix} 
	- \begin{bmatrix} 
		(\widehat{\mathbf{H}}_k \mathbf{B} \mathbf{w}_k)^{\mathrm{H}} \\
		\mathbf{\Xi}_k^{\mathrm{H}}
	\end{bmatrix} 
	\mathbf{C}_k \mathbf{v}_k \right\|^2 + \zeta_{1} \Vert \mathbf{v}_k \Vert^2 
\end{align}
where $\mathbf{\Xi}_k = \big[\widehat{\mathbf{H}}_k \mathbf{B} \mathbf{W}_{\neg k}, 
\alpha_1 \overline{\mathbf{g}}_{1,k}^{(1)}, \dots, \alpha_R \overline{\mathbf{g}}_{R,k}^{(Q)} \big]$. 
Then, after placing $\mathbf{w}_{k}$ into the appropriate position in $\mathbf{W}_{\neg k}$, we can obtain
\begin{align}
	e_{k} = \left\| \mathbf{e}_{k} 
	- \mathbf{E}_{k}
	\mathbf{C}_k \mathbf{v}_k \right\|^2 + \zeta_{1}\Vert \mathbf{v}_k \Vert^2
\end{align}
where
\begin{align}\label{Eq:Theta}
	\mathbf{E}_{k} = \big[ & \widehat{\mathbf{H}}_k \mathbf{B} \mathbf{W}, 
	\alpha_1 \overline{\mathbf{g}}_{1,k}^{(1)}, \dots, \alpha_1 \overline{\mathbf{g}}_{1,k}^{(Q)}, \nonumber \\
	& \dots, \alpha_R \overline{\mathbf{g}}_{R,k}^{(1)}, \dots, \alpha_R \overline{\mathbf{g}}_{R,k}^{(Q)} \big]^{\mathrm{H}}.
\end{align}
\section{Proof of Proposition \ref{Prop:MM-SOMP}}\label{App:D}
The terms in objective of problem \eqref{Opt:PropDisPrecoder} can be simplified, respectively:
\begin{align}
	& \sum_{k=1}^{K}\big[\overline{\mathbf{F}}_{2,k}\big]_{1,1} = \tr\big(\mathbf{M}\big) \\
	& \sum_{k=1}^{K}2\re\big(\big[\overline{\mathbf{F}}_{2,k}\big]_{1,2}\widehat{\mathbf{h}}^{\mathrm{H}}_{k}\mathbf{B}\mathbf{w}_{k}\big) \notag \\
	=&  \sum_{k=1}^{K}\big(\big[\overline{\mathbf{F}}_{2,k}\big]_{1,2}\widehat{\mathbf{h}}^{\mathrm{H}}_{k}\mathbf{B}\mathbf{w}_{k} + \mathbf{w}^{\mathrm{H}}_{k}\mathbf{B}^{\mathrm{H}}\widehat{\mathbf{h}}_{k}\big[\overline{\mathbf{F}}_{2,k}\big]^{\mathrm{H}}_{1,2}\big) \notag \\
	=&  - \tr\big(\mathbf{N}\widehat{\mathbf{H}}\mathbf{B}\mathbf{W}\big) - \tr\big(\mathbf{W}^{\mathrm{H}}\mathbf{B}^{\mathrm{H}}\widehat{\mathbf{H}}^{\mathrm{H}}\mathbf{N}^{\mathrm{H}}\big) \notag \\
	= & - \tr\big(\mathbf{M}^{-\frac{1}{2}}\mathbf{N}\widehat{\mathbf{H}}\mathbf{B}\mathbf{W}\mathbf{M}^{\frac{1}{2}}\big) \notag \\
	& \qquad \qquad \quad -  \tr\big(\mathbf{M}^{\frac{1}{2}}\mathbf{W}^{\mathrm{H}}\mathbf{B}^{\mathrm{H}}\widehat{\mathbf{H}}^{\mathrm{H}}\mathbf{N}^{\mathrm{H}}\mathbf{M}^{-\frac{1}{2}}\big)
\end{align}
\begin{align}
	& \sum_{k=1}^{K}\tr\big(\big[\overline{\mathbf{F}}_{2,k}\big]_{2,2}\big(\widehat{\mathbf{h}}^{\mathrm{H}}_{k}\mathbf{B}\mathbf{W}\mathbf{W}^{\mathrm{H}}\mathbf{B}^{\mathrm{H}}\widehat{\mathbf{h}}_{k} + \omega{\widehat{\sigma}^{2}_{2, k}}\big)\big) \notag \\
	=&  \sum_{k=1}^{K}\tr\big(\widehat{\mathbf{h}}^{\mathrm{H}}_{k}\mathbf{B}\mathbf{W}\big[\overline{\mathbf{F}}_{2,k}\big]_{2,2}\mathbf{W}^{\mathrm{H}}\mathbf{B}^{\mathrm{H}}\widehat{\mathbf{h}}_{k} \big) \notag \\
	& \qquad \qquad \quad  + \sum_{k=1}^{K} \big[\overline{\mathbf{F}}_{2,k}\big]_{2,2}\frac{{\widehat{\sigma}^{2}_{2, k}}}{P_{\mathrm{max}}}\tr\big(\mathbf{W}\mathbf{W}^{\mathrm{H}}\big) \notag \\
	= & \tr\big(\mathbf{M}^{-1}\mathbf{N}\widehat{\mathbf{H}}\mathbf{B}\mathbf{W}\mathbf{W}^{\mathrm{H}}\mathbf{B}^{\mathrm{H}}\widehat{\mathbf{H}}^{\mathrm{H}}\mathbf{N}^{\mathrm{H}}\big) + \zeta_{2}\Vert\mathbf{W} \Vert^{2}_{F} \label{Eq:SOMP}.
\end{align}
The last equality in \eqref{Eq:SOMP} holds because  
\[
\big[\overline{\mathbf{F}}_{2,k}\big]_{2,2} = \big(\big[\overline{\mathbf{F}}_{2,k}\big]_{1,1}\big)^{-1} \big[\overline{\mathbf{F}}_{2,k}\big]_{2,1} \big[\overline{\mathbf{F}}_{2,k}\big]_{1,2}.
\]

\bibliographystyle{IEEEtran} 
\bibliography{IEEEabrv,ref}

\end{document}